\def\colorModel{hsb} %
\newcommand\ColCell[1]{
  \pgfmathparse{#1>=0?1:0}  %
    \ifnum\pgfmathresult=0\relax\color{white}\fi
  \pgfmathsetmacro\compA{0}      %
  \pgfmathsetmacro\compB{0} 	  %
  \pgfmathsetmacro\compC{#1/800+.875}      %
  \edef\x{\noexpand\centering\noexpand\cellcolor[\colorModel]{\compA,\compB,\compC}}\x #1
  } 
\newcolumntype{E}{>{\collectcell\ColCell}m{0.4cm}<{\endcollectcell}}  %
\newcommand*\rot{\rotatebox{90}}
\pgfplotsset{width=8cm,compat=1.8} %
\newcommand{\E}{\mathbb{E}}
\newcommand{\I}{\mathcal{I}}
\newcommand{\LL}{\mathcal{L}_\text{adv}}
\newcommand{\bmmu}{\boldsymbol{\mu}}
\newcommand{\bmdelta}{\boldsymbol{\Delta}}
\DeclareMathOperator*{\argmin}{arg\,min~~}
\DeclareMathOperator*{\subjto}{subject\,to~~}
\newtheorem{theorem}{Theorem}[section]
\newtheorem{lemma}[theorem]{Lemma}
\begin{document}

\title{GhostImage: Remote Perception Attacks against  Camera-based Image
Classification Systems}


%
%
%
%

\author[1]{Yanmao Man}
\author[1]{Ming Li}
\author[2]{Ryan Gerdes}
\affil[1]{University of Arizona}
\affil[2]{Virginia Tech}

\maketitle

\begin{abstract}
In vision-based object classification systems imaging sensors perceive the
environment and machine learning is then used to detect and classify
objects for decision-making purposes; e.g., to maneuver an automated
vehicle around an obstacle or to raise an alarm to indicate the presence of
an intruder in surveillance settings. In this work we demonstrate how the
perception domain can be remotely and unobtrusively exploited to enable an
attacker to create spurious objects or alter an existing object.  An
automated system relying on a detection/classification framework subject to
our attack could be made to undertake actions with catastrophic results due
to attacker-induced misperception.

We focus on camera-based systems and show that it is possible to remotely
project adversarial patterns into camera systems by exploiting two common
effects in optical imaging systems, viz., lens flare/ghost effects and
auto-exposure control. To improve the robustness of the attack to channel
effects, we generate optimal patterns by integrating adversarial machine
learning techniques with a trained end-to-end channel model. We
experimentally demonstrate our attacks using a low-cost projector, on three
different image datasets, in indoor and outdoor environments, and with
three different cameras.  Experimental results show that, depending on the
projector-camera distance, attack success rates can reach as high as 100\%
and under targeted conditions.
\end{abstract}


\section{Introduction}

Object detection and classification have been widely adopted in autonomous
systems, such as automated vehicles \cite{uber, waymo, autopilot} and unmanned
aerial vehicles \cite{skydio, primeair}, as well as surveillance systems, e.g.,
smart home monitoring systems \cite{nest-systems, ring-systems}.  These systems
first perceive the surrounding environment via sensors (e.g., cameras, LiDARs,
and motion sensors) that convert analog signals into digital data, then try to
understand the environment using object detectors and classifiers (e.g.,
recognizing traffic signs or unauthorized persons), and finally make a decision
on how to influence/interact with the environment (e.g., a vehicle may
decelerate or a surveillance system raises an alarm). %

While the cyber (digital) attack surface of such systems have been widely studied
\cite{checkoway2011comprehensive,miller2014survey,petit2014potential,costin2014large},
vulnerabilities in the perception domain are less well-known, despite perception being
the first and critical step in the decision-making pipeline. That is, if sensors can be
compromised then false data can be injected and the decision making process will indubitably be harmed as the system is not acting on an accurate view of its environment. Recent work has demonstrated false data injection against sensors in a remote manner via either electromagnetic (radio frequency)
interference~\cite{selvaraj2018electromagnetic}, laser pulses (against
microphones~\cite{sugawaralight}, or LiDARs~\cite{shin2017illusion,
petit2015remote, cao2019adversarial}), and acoustic waves \cite{son2015rocking,
yan2020surfingattack}. These perception domain sensor attacks alter the
data at the source, hence bypassing traditional digital defenses (such as
crypto-based authentication or access control), and are subsequently much harder to defend
against \cite{yan2020minimalist, giechaskiel2020taxonomy}. These attacks can also be remote in that the attacker needn't physically contact/access/modify devices or objects.

\begin{figure}[tb]
	\centering
	\begin{subfigure}[b]{.45\columnwidth}
		\centering
		\includegraphics[width=.8\columnwidth]{./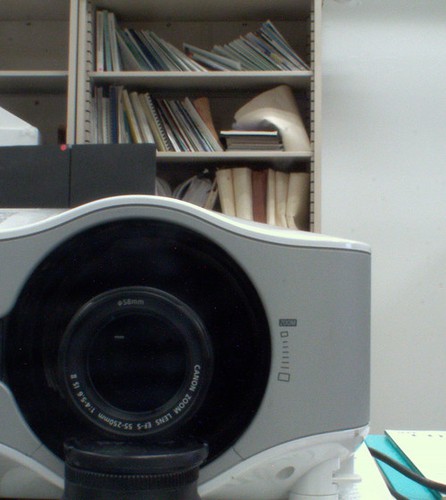}
		\caption{Projector off}
	\end{subfigure}%
	\begin{subfigure}[b]{.45\columnwidth}
		\centering
		\includegraphics[width=.8\columnwidth]{./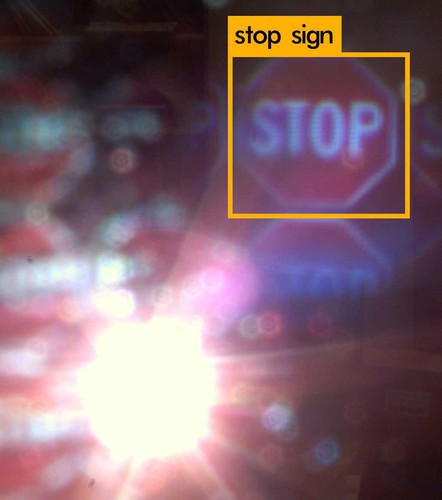}
		\caption{Projector on}
	\end{subfigure}

	\caption{A STOP sign image was injected into a camera by a projector, which
	was detected by YOLOv3~\cite{redmon2016you}.}
	\label{fig:yolov3_stop}
\end{figure}

Among the aforementioned sensors, at least for automated systems in the transportation and surveillance domains, cameras are more common/crucial. Existing
remote attacks against cameras are limited to, essentially, denial-of-service attacks
\cite{truong2005preventing, petit2015remote, yan2016can}, which are easily
detectable (e.g., by tampering detection~\cite{ribnick2006real}) and for which effective mitigation strategies exist (e.g., by sensor fusion~\cite{li2013sensor}). In this work, we
consider attacks that cause camera-based image classification system to either misperceive actual objects or perceive non-existent objects
by remotely injecting light-based interference into a camera, without
blinding it. Formally, we consider \emph{creation attacks}
whereby a spurious object (e.g., a non-existent traffic sign, or obstacle) is 
seen to exist in the environment by a camera, and \emph{alteration attacks}, in which an
existing object in the camera view is changed into another attacker-determined object (e.g.,
changing a STOP sign to a YIELD sign or changing an intruder into a
bicycle).

As it is not possible, due to optical principles, to directly project an image into a camera, we propose to exploit two common effects
in optical imaging systems, viz., \emph{lens flare effects} and \emph{exposure
control} to induce camera-based misperception. The former effect is due to the imperfection of lenses, which causes light
beams to be refracted and reflected multiple times resulting in polygon-shape
artifacts (a.k.a., \emph{ghosts}) to appear in images~\cite{fotographee,
hullin2011flare}.  Since ghosts and their light sources typically appear at
different locations, an attacker can overlap specially crafted ghosts with the target
object's without having the light source blocking it.  Auto exposure control
is a feature common to cameras that determines the amount of light incident on the imager and is used, for example, to make
images look more natural.  An attacker can leverage exposure control to make the
background of an image darker and the ghosts brighter, so as to make the ghosts more prominent (i.e., noticeable to the detector/classifier) and thus increase attack success
rates. Fig.~\ref{fig:yolov3_stop} presents an example of a creation attack,
where we used a projector to inject an image of a STOP sign in a ghost, which
is detected and classified as a STOP sign by YOLOv3~\cite{redmon2016you}, a
state-of-the-art object detector. 

Theoretically arbitrary patterns can be injected via ghosts. However, it is challenging to practically and precisely control the ghosts, in terms of their
resolutions and positions in images, making arbitrary injection impracticable in some scenarios. Hence, we propose an empirical
projector-camera channel model that predicts the resolution and color of injected
ghost patterns, as well as the location of ghosts, for a given 
projector-camera arrangement. Experimental
results show that at short distances attack success rates are as high
as 100\%, but at longer distances the rates decrease sharply; this is because
at long distances ghost resolutions are low, resulting in patterns that cannot be recognized by the classifier.

To improve the efficacy of our attack, which we dub GhostImage, especially at lower
resolutions, we assume that the attacker possesses knowledge  about the
image classification/detection algorithm. Based on this knowledge the attacker is able to formulate and solve an
optimization problem to find optimal attack patterns, of varying resolutions, to project that will be recognized by the image classifier as the intended target class  \cite{szegedy2014intriguing, carlini2017towards}; i.e., the pattern projected will yield a classification result of the attacker's choice. As the channel may distort the injected image (in terms of color, brightness,
and noise), we extend our projector-camera model to include auto
exposure control and color calibration and integrate the channel model into
our optimization formulation. This results in a pattern generation approach that is resistant to
channel effects and thus able to defeat a classifier under realistic conditions.

We use self-driving and surveillance systems as two illustrative examples
to demonstrate the potential impact of GhostImage attacks. Proof-of-concept experiments were conducted with
different cameras, image datasets, and environmental conditions. Results show that our attacks are able to achieve attack
success rates as high as 100\%, depending on the projector-camera distance. Our contributions are summarized as follows.
\begin{itemize}[leftmargin=10pt]
	\item We are the first to study \emph{remote} perception attacks against
		camera-based classification systems, whereby the attacker
		induces misclassification of objects by injecting light, conveying
		adversarially generated patterns, into the camera.

	\item Our attack leverages optical effects/techniques, namely, lens flare and auto-exposure control, that are widespread and common, making the attack likely to be effective against most cameras.  Furthermore, we incorporate these effects in an
		end-to-end manner into an adversarial
		machine learning-based optimization framework to find the optimal
		patterns an attacker should inject to cause misperception.

	\item We demonstrate the efficacy of the attacks through experiments with varying image datasets, cameras,
		distances, and indoor to outdoor environments. Results show that
		GhostImage attacks are able to achieve attack success rates as high as
		100\%, depending on the projector-camera distance. %
\end{itemize}

\section{System and Threat Model}
\label{sec:model}

System and attack models are described, including two
attack objectives and the attacker's capabilities.

\subsection{System Model}
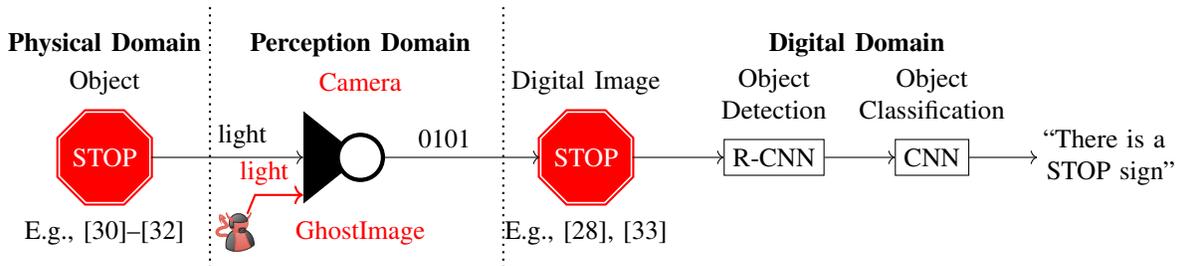
\begin{figure*}[t]
	\centering
	\begin{tikzpicture}

\begin{scope}[shift={(0, 0)}]
\node (s1) [regular polygon, regular polygon sides=8, draw=red, double,
thick, fill=red, text=white, align=center, inner sep=0mm, node contents={STOP}];
\end{scope}

\begin{scope}[shift={(3.4,0)}, rotate=90, scale=.3]
	    \draw[fill=black] (0,0) -- (2,2.5) -- (-2,2.5) -- cycle;
	    \draw [fill=white,ultra thick](0,0) circle (1);
\end{scope}

\begin{scope}[shift={(6.4, 0)}]
\node (s2) [regular polygon, regular polygon sides=8, draw=red, double,
thick, fill=red, text=white, align=center, inner sep=0mm, 
node contents={STOP}];
\end{scope}

\node (od) [draw] at (8.9, 0) {R-CNN};
\node (oc) [draw] at (11, 0) {CNN};
\node (cls) [text width=50] at (13.4, 0) {``There is a STOP sign''};

\node at (0, 1) {Object};
\node [red] at (3.4, 1) {Camera};
\node at (6.4, 1) {Digital Image};
\node [text width=40, align=center, anchor=center] at (8.9, .85) {Object Detection};
\node [text width=55, align=center, anchor=center] at (11,  .85) {Object Classification};

\node at (0, -1) {E.g., \cite{eykholt2018robust,sharif2016accessorize,zhao2019seeing}};
\node [red] at (3.4, -1) {GhostImage};
\node at (6.4, -1) {E.g., \cite{szegedy2014intriguing, goodfellow2014explaining}};

\draw[->, black] (s1) -- (2.64,0) node [pos=.6, anchor=east, above] {light};
\draw[->, black] (3.68, 0) -- (s2) node [pos=.4, anchor=west, above] {0101};
\draw[->, black] (s2) -- (od);
\draw[->, black] (od) -- (oc);
\draw[->, black] (oc) -- (cls);

\draw[dotted, thick] (1.4, 2) -- (1.4, -1.5);
\draw[dotted, thick] (5.3, 2) -- (5.3, -1.5);
\node at (0, 1.5) {\textbf{Physical Domain}};
\node at (3.4, 1.5) {\textbf{Perception Domain}};
\node at (10, 1.5) {\textbf{Digital Domain}};

\node[devil] (devil) at (1.8, -1) {};
\draw[->, red, thick] (devil) -- (2, -.5) -- (2.64, -.5) node [pos=.9, anchor=south east, red] {light};

\end{tikzpicture}

	\caption{Camera-based object classification systems. GhostImage attacks
	target the perception domain, i.e., the camera.}

	\label{fig:sys_model}
\end{figure*}

We assume an end-to-end camera-based object classification system
(Fig.~\ref{fig:sys_model}) in which a camera captures an image of a scene with
objects of interest. The image is then fed to an object detector to crop out
the areas of objects, and finally these areas are given to a neural network to
classify the objects. Autonomous systems increasingly rely on such classification systems to make  decisions and actions. If the classification result is incorrect (e.g., modified by an adversary), wrong actions could be taken. For example, in a surveillance system, if an intruder is not
detected, the house may be broken-in without raising an alarm.

\subsection{Threat Model}
We consider two different attack objectives. In  \textbf{creation attacks} the goal is to inject a spurious (i.e., non-existent)
object into the scene and have it be recognized (classified) as though it were physically present. For \textbf{alteration attacks} an attacker injects adversarial patterns over an object of interest in the scene that causes the object to be misclassified.

There are two types of attackers with differing capabilities:
\textbf{Camera-aware attackers} who possess knowledge of the victim's camera
(i.e., they do not know the configuration of the lens system, nor
post-processing algorithms, but they can possess the same type of camera used
in the target system), from which they can train a channel model using the
camera as a black-box. With such capabilities, they are able to achieve
creation attacks and alteration attacks. \textbf{System-aware attackers} not
only possess the capabilities of the camera-aware attackers, but also know
about the image classifier including its architecture and parameters, i.e.,
black-box attacks on the camera but white-box attacks on the classifier. With
such capabilities, it is able to achieve creation attacks and alteration
attacks as well, but with higher attack success rates.

Both types of attackers are remote (unlike \cite{li2019adversarial}), i.e.,
they do not have access to the hardware or the firmware of the victim camera,
nor to the images that the camera captures. We assume that both attackers
are able to track and aim victim cameras \cite{cao2019adversarial,
truong2005preventing, chinalasers}.

\section{Background}
\label{sec:background}

In this section, we will introduce optical imaging principles, including
flare/ghost effects and exposure control, which we will exploit to \emph{realize}
GhostImage attacks. Then, we will discuss the preliminaries about neural
networks and adversarial examples that we will use to \emph{enhance} GhostImage
attacks.

\subsection{Optical Imaging Principles}
\label{sec:flare}

Due to the optical principles of camera-based imaging systems, it is not
feasible to directly point a projector at a camera, hoping that the projected
patterns can appear at the same location with the image of the targeted object,
because the projector has to obscure the object in order to make
the two images  overlap. We prove this infeasibility in Appendix~\ref{sec:dir}.
Instead, we exploit lens flare effects and auto exposure control to inject
adversarial patterns.

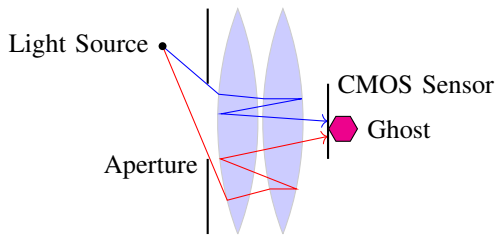
\begin{figure}[b]
	\centering
	\begin{tikzpicture}
	\draw[fill=blue, opacity=.2] (0,0) arc (-30:30:2 and 3) arc (150:210:2 and 3) ;
	\draw[fill=blue, opacity=.2] (.6,0) arc (-30:30:2 and 3) arc (150:210:2 and 3) ;

	\draw[thick] (-.4, 0) -- (-.4, 1) node[pos=.9, anchor=east] {Aperture};
	\draw[thick] (-.4, 2) -- (-.4, 3);

	\draw[->, blue] (-1, 2.5) -- (-.253, 1.86) -- (0.35, 1.8) -- (.85, 1.8) -- (-.22, 1.6) -- (1.2, 1.5);
	\draw[->, red] (-1, 2.5) -- (-.14, 0.45) -- (.43, 0.6) -- (.78, .6) -- (-.23, 1) -- (1.2, 1.3);

	\draw[thick] (1.2, 1) -- (1.2, 2) node[pos=1, anchor=west] {CMOS Sensor};

	\node [draw, regular polygon, regular polygon sides=6, fill=magenta, label=right:{Ghost}]  at (1.4, 1.4) {};

	\node[circle,fill=black,inner sep=0pt,minimum size=3pt,label=left:{Light Source}] (a) at (-1, 2.5) {};
\end{tikzpicture}
	\caption{Ghost effect principle}
	\label{fig:ghost}
\end{figure}

\emph{Lens flare effects}~\cite{hullin2011flare} refer to a phenomenon where
one or more undesirable artifacts appear on an image because bright light  get
scattered or flared in a non-ideal lens system (Fig.~\ref{fig:ghost}).
Ideally, all light beams should pass directly through the lens and reach the
CMOS sensor. However, due to the quality of the lens elements, a small portion
of light gets reflected several times within the lens system and then reaches
the sensor, forming multiple polygons (called ``ghosts'') on the image.  The
shape of polygons depends on the shape of the aperture. For example, if the
aperture has six sides, there will be hexagon-shaped ghosts in the image.
Normally ghosts are very weak and one cannot see them, but when a strong light
source (such as the sun, a light bulb, a laser, or a projector) is present
(unnecessarily captured by the CMOS sensor, though \cite{wiki-flare}), the
ghost effects become visible.  Fig.~\ref{fig:ghost} shows only one reflection
path, but there are many other paths and that is why there are usually multiple
ghosts in an image.

Existing literature~\cite{hullin2011flare, lee2013practical,
steinert2011general} about ghosts focused on the simulation of ghosts given the
detailed lens configurations, in which the algorithms simulate every possible
reflection path. Such white-box models are computationally expensive, and also
requires white-box knowledge of internal lens configurations, thus are not
suitable for our purposes. In Sections~\ref{sec:naive-attacks} and
\ref{sec:adv-attacks}, we study flare effects in a black-box manner (more
general than Vitoria et al.~\cite{vitoria2019automatic}), where we train a
lightweight end-to-end model that is able to predict the locations of ghosts,
estimate the resolutions within ghost areas, and also calibrate colors.

\emph{Exposure control} mechanisms \cite{lee2005introduction,
exposurecontrol} are often equipped in cameras to adjust brightness by changing
the size of the aperture or the exposure time. In this work, we will model and
exploit auto exposure control to manipulate the brightness balance between the
targeted object and the injected attack patterns in ghosts.

\subsection{Neural Nets and Adversarial Examples}
\label{sec:nn-adv}

We abstract a neural network as a function $Y = f_\theta(x)$ and we omit the
details of it due to the page limit. The input $x \in \mathbb{R}^{w\times
h\times 3}$ (width, height and RGB channels) is an image, $Y \in \mathbb{R}^m$
is the output vector, and $\theta$ is the parameters of the network (which is
fixed thus we omit it for convenience).  A softmax layer is usually added to
the end of a neural network to make sure that $\sum_{i=1}^m Y_i = 1$ and $Y_i
\in [0, 1]$.  The classification result is $C(x) = \mathrm{argmax}_i Y_i$.
Also, the inputs to the softmax layer are called \emph{logits} and denoted as
$Z(x)$.

An adversarial example \cite{szegedy2014intriguing} is denoted as $y$, where $y
= x + \Delta$. Here, $\Delta$ is additive noise that has the same
dimensionality with $x$. Given a benign image $x$ and a target label $t$, an
adversary wants to find a $\Delta$ such that $C(x+\Delta) = t$, i.e.,
\emph{targeted attacks}. Note that, in this paper, the magnitude of $\Delta$ is not constrained below a small threshold, since the perceived images are
usually not directly observed by human users. But we still try to minimize it
because it represents the attack power and cost.

\section{Camera-aware GhostImage Attacks}
\label{sec:naive-attacks}

In this section, we will discuss how a camera-aware attacker is able to inject
arbitrary patterns in the perceived image of the victim camera using
projectors. We will discuss the possibilities of using other attack vectors in
Section~\ref{sec:alter}.

\subsection{Technical Challenges}

Since we assume that the attacker do not have access to the images that the targeted
camera captures, he/she will have to be able to predict how ghosts might
appear in the image. First, the locations of ghosts should be predicted given
the relevant positions of the projector and the camera, so that the attacker
can align the ghost with the image of the object of interest to achieve
alteration attacks. Second, since a projector can inject shapes in ghost
areas, the attacker needs to find out the maximum resolution of shapes that it
can inject. Lastly, it is also challenging to realize the attacks
derived from the position and resolution models above with a limited budget.

\subsection{Ghost Pixel Coordinates}
\label{sec:ghost-pos}

\begin{figure}[t]
	\centering
	\begin{subfigure}[b]{.45\columnwidth}
		\centering
		\begin{tikzpicture}[scale=.5]
	\makeatletter
	\tikzoption{canvas is xy plane at z}[]{%
	  \def\tikz@plane@origin{\pgfpointxyz{0}{0}{#1}}%
	  \def\tikz@plane@x{\pgfpointxyz{1}{0}{#1}}%
	  \def\tikz@plane@y{\pgfpointxyz{0}{1}{#1}}%
	  \tikz@canvas@is@plane
	}
	\makeatother
	
	\tikzset{xyp/.style={canvas is xy plane at z=#1}}
	\tikzset{xzp/.style={canvas is xz plane at y=#1}}
	\tikzset{yzp/.style={canvas is yz plane at x=#1}}

	\fill[opacity=0.2,blue,draw=black,yzp=0] (0,0) (-1,-1) rectangle (1, 1);

	\node [circle, fill=black, inner sep=0pt,minimum size=3pt,label=left:{$O_I(x_O, y_O)$}] (op) at (0, 0, 0) {};
	\node [circle, fill=red, inner sep=0pt,minimum size=5pt,label={[right]:{$A'(x', y', z')$}}] (a) at (-4.5, 1.5, -1) {};
	\node [circle, fill=red, inner sep=0pt,minimum size=3pt,label={[left]:{$A(x_A, y_A)$}}] (ap) at (0, 1/2, -1/2) {};
	\node [regular polygon, regular polygon sides=5, fill=red, inner sep=0pt,minimum size=4pt,label={[left]:{$G(x_G, y_G)$}}] (g) at (0, -3/4, 3/4) {};

	\draw [dashed] (ap) -- (g);

	\draw [dashed, opacity=.3] (-4, -2, 2)  -- (0, -1, 1);
	\draw [dashed, opacity=.3] (-4, 2, -2) -- (0, 1, -1);
	\draw [dashed, opacity=.3] (-4, 2, 2) -- (0, 1, 1);
	\draw [dashed, opacity=.3] (-4, -2, -2) -- (-.8, -.8/4-1, -.8/4-1);

	\draw[->] (-3.2, -2) -- (-1.2, -2) node[midway, anchor=center, below] {Light};

\end{tikzpicture}
		\caption{Capture}
		\label{fig:capture}
	\end{subfigure}%
	\begin{subfigure}[b]{.45\columnwidth}
		\centering
		\begin{tikzpicture}[scale=.5]
	\makeatletter
	\tikzoption{canvas is xy plane at z}[]{%
	  \def\tikz@plane@origin{\pgfpointxyz{0}{0}{#1}}%
	  \def\tikz@plane@x{\pgfpointxyz{1}{0}{#1}}%
	  \def\tikz@plane@y{\pgfpointxyz{0}{1}{#1}}%
	  \tikz@canvas@is@plane
	}
	\makeatother
	
	\tikzset{xyp/.style={canvas is xy plane at z=#1}}
	\tikzset{xzp/.style={canvas is xz plane at y=#1}}
	\tikzset{yzp/.style={canvas is yz plane at x=#1}}

    \fill[opacity=0.2,blue,draw=black,yzp=0] (0,0) (-1,-1) rectangle (1, 1);
    \fill[opacity=0.2,blue,draw=black,yzp=-4] (0,0) (-2,-2) rectangle (2, 2);
	\draw[yzp=-4, draw=black!70, fill=red!50] (1.2, -1) circle [radius=.3] node[anchor=west] {$S_f$};
	\node at (-4.55, -2.2) {$S$};
	\node at (-.13, -.8) {$I$};

	\node [circle, fill=black, inner sep=0pt,minimum size=3pt,label=above:{$O_I$}] (op) at (0, 0, 0) {};
	\node [circle, fill=black, inner sep=0pt,minimum size=3pt,label=below:{$O_S$}] (o) at (-4, 0, 0) {};


	\draw [dashed, opacity=.3] (-4, -2, 2)  -- (0, -1, 1);
	\draw [dashed, opacity=.3] (-4, 2, -2) -- (0, 1, -1);
	\draw [dashed, opacity=.3] (-4, 2, 2) -- (0, 1, 1);
	\draw [dashed, opacity=.3] (-4, -2, -2) -- (-.8, -.8/4-1, -.8/4-1);

	\draw[<-] (-3.2, -2) -- (-1.2, -2) node[midway, anchor=center, below] {Light};

	\node at (-4.1, 2.2) {$w$};
	\node at (-3, .6) {$h$};
\end{tikzpicture}
		\caption{Projection}
		\label{fig:projection}
	\end{subfigure}

	\caption{Capture and projection are reverses of each other.}
	\label{fig:dual}
\end{figure}

Given the pixel coordinates of the target object $G$ (Fig.~\ref{fig:capture}),
we need to derive the real-world coordinates $A'$ of the projector so that we
know where to place the projector in order to let one of the ghosts
overlap with the image of the object. To do this, we derive the relationship
between $G$ and $A'$ in two steps: We first calculate the pixel coordinates of
the light source $A$ given $A'$, and then we calculate $G$ based on $A$.

Based on homogeneous coordinates \cite{szeliski2010computer}, assuming the
camera is at the origin of the coordinate system, we have
\begin{equation}
	(u,v,w)^\top = M \cdot (x', y', z', 1)^\top,
	\label{eq:homo_coord}
\end{equation}
where $M$ is the camera's geometric model \cite{szeliski2010computer}, a
$3\times 4$ matrix. $M$ can be trained from another (similar) camera, and
then be applied to the victim camera. The
coordinates of $A$ is then $A = (x_A, y_A)^\top = (u/w, v/w)^\top$, by the
homogeneous transformation. Note that, $A$ does not have to appear in the view of the camera,
which makes the attack more stealthy (See \cite{wiki-flare} and
Fig.~\ref{fig:oov} in the appendix).

In order to find the relationship of the pixel coordinates between light
sources $A$ and their ghosts $G$, we did a simple experiment where we moved
around a flashlight in front of the camera~\cite{aptina}, and recorded the
pixel coordinates of the flashlight and the ghosts. Similar to
\cite{vitoria2019automatic}, we observe that, for
each $G$, we have $\overline{AO_I}/\overline{O_I G}=r_G$ (being constant), wherever
$A$ is (Fig.~\ref{fig:ghost-2d}), and $r \in
(-\infty, \infty)$. This means the feasible region for the placement of the projector is
large; to attack an autonomous vehicle, for example, it can be located on an overbridge,
on a traffic island, or even
in the preceding vehicle, or on a drone, etc. Finally, given $A=(x_A, y_A)$, $O_I=(x_O, y_O)$ and $r$,
\begin{equation}
	G = \begin{pmatrix}
		x_O-(x_A-x_O)/r\\
		y_O-(y_A-y_O)/r
	\end{pmatrix}.
	\label{eq:ghost-coord}
\end{equation}
With $G$'s coordinates, the attacker is able to predict the pixel location of ghosts and try
adjusting the position and orientation (which implies the angle) of the light source in the
real world so as to align one or more ghosts with the image of the object, whose pixel
coordinates can be calculated using (\ref{eq:homo_coord}) similarly.

\begin{figure}[t]
	\centering
\begin{tikzpicture}

\definecolor{color0}{rgb}{0.12156862745098,0.466666666666667,0.705882352941177}
\definecolor{color1}{rgb}{1,0.498039215686275,0.0549019607843137}
\definecolor{color2}{rgb}{0.172549019607843,0.627450980392157,0.172549019607843}
\definecolor{color3}{rgb}{0.83921568627451,0.152941176470588,0.156862745098039}
\definecolor{color4}{rgb}{0.580392156862745,0.403921568627451,0.741176470588235}
\definecolor{color5}{rgb}{0.549019607843137,0.337254901960784,0.294117647058824}
\definecolor{color6}{rgb}{0.890196078431372,0.466666666666667,0.76078431372549}
\definecolor{color7}{rgb}{0.737254901960784,0.741176470588235,0.133333333333333}
\definecolor{color8}{rgb}{0.0901960784313725,0.745098039215686,0.811764705882353}

\begin{axis}[
height=5cm, width=5cm,
legend cell align={left},
legend style={at={(1.2, .95)}, anchor=north west, draw=white!80.0!black},
tick align=outside,
tick pos=both,
x grid style={white!69.01960784313725!black},
xmin=0, xmax=1600,
xtick style={color=black},
y grid style={white!69.01960784313725!black},
ymin=0, ymax=900,
ytick style={color=black}
]
\addplot [semithick, color0, mark=x, mark size=2, mark options={solid}, only marks, forget plot]
table {%
1052 348
};
\addplot [semithick, color1, mark=*, mark size=2, mark options={solid}, only marks, forget plot]
table {%
308 703
};
\addplot [semithick, color2]
table {%
1052 348
308 703
};
\addlegendentry{0.40}
\addplot [semithick, color3, mark=x, mark size=2, mark options={solid}, only marks, forget plot]
table {%
992 491
};
\addplot [semithick, color4, mark=*, mark size=2, mark options={solid}, only marks, forget plot]
table {%
457 430
};
\addplot [semithick, color5]
table {%
992 491
457 430
};
\addlegendentry{0.36}
\addplot [semithick, color6, mark=x, mark size=2, mark options={solid}, only marks, forget plot]
table {%
912 637
};
\addplot [semithick, white!49.80392156862745!black, mark=*, mark size=2, mark options={solid}, only marks, forget plot]
table {%
598 124
};
\addplot [semithick, color7]
table {%
912 637
598 124
};
\addlegendentry{0.42}
\addplot [semithick, color8, mark=x, mark size=2, mark options={solid}, only marks, forget plot]
table {%
750 605
};
\addplot [semithick, color0, mark=*, mark size=2, mark options={solid}, only marks, forget plot]
table {%
950 71
};
\addplot [semithick, color1]
table {%
750 605
950 71
};
\addlegendentry{0.41}
\addplot [semithick, color2, mark=x, mark size=2, mark options={solid}, only marks, forget plot]
table {%
721 582
};
\addplot [semithick, color3, mark=*, mark size=2, mark options={solid}, only marks, forget plot]
table {%
1080 136
};
\addplot [semithick, color4]
table {%
721 582
1080 136
};
\addlegendentry{0.42}
\addplot [semithick, color5, mark=x, mark size=2, mark options={solid}, only marks, forget plot]
table {%
722 539
};
\addplot [semithick, color6, mark=*, mark size=2, mark options={solid}, only marks, forget plot]
table {%
1123 220
};
\addplot [semithick, white!49.80392156862745!black]
table {%
722 539
1123 220
};
\addlegendentry{0.39}
\addplot [semithick, color7, mark=x, mark size=2, mark options={solid}, only marks, forget plot]
table {%
768 534
};
\addplot [semithick, color8, mark=*, mark size=2, mark options={solid}, only marks, forget plot]
table {%
951 241
};
\addplot [semithick, color0]
table {%
768 534
951 241
};
\addlegendentry{0.42}
\draw[] (axis cs:950,650) -- (axis cs:950,650);
\node at (axis cs:950,650)[
  anchor=base west,
  text=black,
  rotate=0.0
]{$A$};
\draw[] (axis cs:630,100) -- (axis cs:630,100);
\node at (axis cs:630,100)[
  anchor=base west,
  text=black,
  rotate=0.0
]{$G$};
\draw[] (axis cs:780,410) -- (axis cs:780,410);
\node [circle, fill=black, inner sep=0pt,minimum size=4pt,label={[above right]:{$O_I$}}] at (axis cs:810,470) {};
\end{axis}

\end{tikzpicture}

	\caption{Ghost position v.s. light source position. Crosses are light
	source at different locations and the circles are the according biggest ghosts (as examples).}

	\label{fig:ghost-2d}
\end{figure}
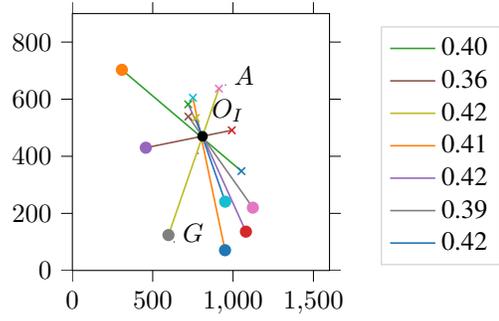

\subsection{Ghost Resolution}

In our daily life, ghosts normally appear as pieces of single-color
polygon-shaped artifacts; this is because the light sources that cause these
regular ghosts are single-point sources of light that have just one single
color, such as light bulbs, flashlights, etc. In this work, however, we find
out that one is able to bring patterns into these ghost areas by simply using
a low-cost projector, a special source of light that shines
variant patterns in variant colors. For example, in
Fig.~\ref{fig:yolov3_stop}, an image of a STOP sign that is projected by a
projector, appears in one of the ghost areas in the image; this is because the
pixel resolution of the projector is   high enough that multiple light beams in different colors (got reflected among lenses and then) go into the same ghost.
In this subsection, we   study the resolution of the patterns in ghost
areas~\footnote{We are interested in the resolution of the projector pixels,
not camera pixels; a projector pixel is usually captured by multiple camera
pixels.}.

Let us first define the \emph{throwing ratio} of a projector.  In
Fig.~\ref{fig:projection}, let plane $S$ be the projected screen (e.g. on a
wall), whose height and width are denoted as $h$ and $w$, respectively. The
distance $d=\overline{O_S O_I}$ is called the throwing distance. The throwing
ratio of this projection is $r_\text{throw} = d/w$.
The (physical) size of the projected screen at the victim camera's location is
denoted $S_O$, a part of which is captured by the CMOS sensor of the camera in
the ghost area, and we denote the (physical) size of that area as $S_f$. Let us
also define the resolution of the entire projected screen as $P_O$ in terms of
pixels (e.g., $1024\times 768$), and the resolution of the ghost as $P_f$.
Clearly, there is a linear relationship among them: $P_f/P_O = S_f/S_O$, where
$S_O = wh$. Finally, we can calculate the resolution of the ghost given $d$ and
$r_\text{throw}$:
\begin{equation}
	P_f = \frac{P_OS_f}{\frac{h}{w}\left(\frac{d}{r_\text{throw}}\right)^2}.
	\label{eq:ghost_reso}
\end{equation}
Here, $S_f$ is a constant because the size of the lens is fixed; e.g., the camera~\cite{aptina} has $S_f = 0.0156~\text{cm}^2$. %

\subsection{Attack Realization and Experiment Setup}
\label{sec:setup}

\begin{figure*}
	\begin{subfigure}[b]{.39\textwidth}
		\centering
		\begin{tikzpicture}[decoration={markings,mark=at position 0.5 with {\arrow{>}}}, scale=.7]

\begin{scope}[shift={(0, 0)}]
\node (s) [regular polygon, regular polygon sides=8, draw=red, double,
thick, fill=red, text=white, align=center, inner sep=0mm] {\footnotesize \textbf{STOP}};
\end{scope}

\begin{scope}[shift={(5.5,0)}, rotate=90, scale=.5]
	    \draw[fill=black] (0,0) -- (2,2.5) -- (-2,2.5) -- cycle;
	    \draw[fill=white,ultra thick](0,0) circle (1);
\end{scope}

\begin{scope}[shift={(1.5,-2)}, rotate=-50, scale=.3]
	    \draw[fill=red!80, draw=none] (0,0) -- (2,2.5) -- (-2,2.5) -- cycle;
	    \draw [draw=red!80, fill=blue!80,ultra thick](0,0) circle (1);
\end{scope}

\node (is) [regular polygon, regular polygon sides=8, draw=red,
		fill=red, inner sep=0pt, minimum size=10pt] at (5.55, .05) {};
\node (g) [regular polygon, regular polygon sides=6, draw=blue!40, opacity=.8,
		fill=blue!40] at (5.45, -0.05) {};

\draw[->, dashed] (6, -0.5) -- (g) node[pos=0, anchor=west] {Ghost};
\draw[->, dashed] (6, 0.5) -- (is) node[pos=0, anchor=west, text width=30] {Object Image};

\node at (0, -1.1) {Object};
\node at (5.3, -1.1) {Camera};
\node[anchor=east] at (1.2, -2) {Projector};

\draw[postaction={decorate}, red] (s) -- (4.23, 0);
\draw[postaction={decorate}, blue] (2.1, -1.55) -- (4.23, 0);

\end{tikzpicture}%

	\end{subfigure}
	\begin{subfigure}[b]{.39\textwidth}
		\centering
		\input{./figs/inlab_setup.tikz}
	\end{subfigure}
	\begin{subfigure}[b]{.2\textwidth}
		\begin{subfigure}{\columnwidth}
			\centering
			\includegraphics[width=.6\columnwidth]{./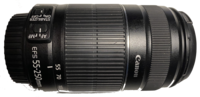}
		\end{subfigure}
		\begin{subfigure}[b]{\columnwidth}
			\centering
			\includegraphics[width=.8\columnwidth]{./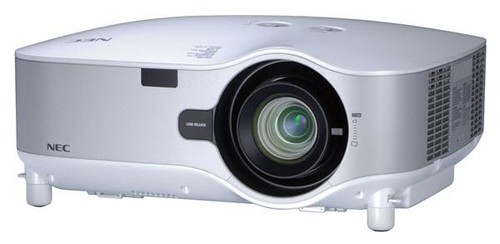}
		\end{subfigure}
	\end{subfigure}

	\caption{(Left) Attack setup diagram. (Middle) In-lab experiment setup.
	(Right) Attack equipments: We replaced the original lens of the NEC NP3150
	Projector~\cite{nec-np1150-manual} with a Canon EFS 55-250 mm zoom
	lens~\cite{canon}.}

	\label{fig:inlab_setup}

\end{figure*}

According to Eq.~\ref{eq:ghost_reso}, if the attacker wants to carry out
long-distance and high-resolution GhostImage attacks, it needs a projector with
a large throwing ratio $r_\text{throw}$. However, the factory longest-throw
lenses (NEC NP05ZL Zoom Lens~\cite{np05zl}) of our projector can achieve a
throwing ratio of maximum $7.2$ (which means $9\times 9$ at one meter), and
expensive (about \SI{1600}[\$]{}).
Instead, we use a cheap (\SI{80}[\$]{}) zoom lens (Fig.~\ref{fig:inlab_setup}, right)
\cite{canon} that was originally designed for Canon cameras. In our
experiments, such a configuration is interestingly feasible\footnote{Because
projectors and cameras are dual devices (Fig.~\ref{fig:dual}), their lenses are
interchangeable.}
(Fig.~\ref{fig:inlab_setup}), achieving the maximum throwing ratio of 20 when
the focal length is \SI{250}{\milli\meter}, which means that at a distance of
one meter, $32\times 32$-resolution attacks can be achieved. See
Sec.~\ref{sec:disc-prac} for more discussion on lens and projector selection.

Fig.~\ref{fig:inlab_setup} (left) shows a general diagram of GhostImage
attacks, where the light source (i.e., a projector) is pointing at the camera
from the side, so that the camera can still capture the object (e.g., a STOP
sign) for alteration attacks. The light source injects light interference
(marked in blue) into the camera, which gets reflected among the lenses of the
camera, resulting in ghosts that overlap with the object in the image.
Accordingly, a photo of our in-lab experiment setup is given in
Fig.~\ref{fig:inlab_setup}. The Canon lens was loaded in the NEC projector,
though it cannot be seen in the photo. We will evaluate our attack on three
different cameras (Sec.~\ref{sec:diff_cam}).

\begin{figure}[b]
    \centering
	\begin{subfigure}{0.18\columnwidth}
	  \includegraphics[width=\columnwidth]{./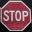}
	  \caption{$32\times 32$}
	\end{subfigure}
	\begin{subfigure}{0.18\columnwidth}
	  \includegraphics[width=\columnwidth]{./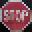}
	  \caption{$16\times 16$}
	\end{subfigure}
	\begin{subfigure}{0.18\columnwidth}
	  \includegraphics[width=\columnwidth]{./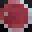}
	  \caption{$8\times 8$}
	\end{subfigure}
	\begin{subfigure}{0.18\columnwidth}
	  \includegraphics[width=\columnwidth]{./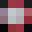}
	  \caption{$4\times 4$}
	\end{subfigure}
	\begin{subfigure}{0.18\columnwidth}
	  \includegraphics[width=\columnwidth]{./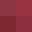}
	  \caption{$2\times 2$}
	\end{subfigure}
	\caption{Downsampling examples. We actually upsampled these images
	for the sake of presentation, otherwise they were too tiny to show.}
	\label{fig:ds_eg}
\end{figure}

To mount a creation attack, the attacker computes the maximum resolution $P_f$
for the ghost with a distance $d$ based on (\ref{eq:ghost_reso}), and then
\emph{downsamples} the target image to the resolution $P_f$
in order to fit in the ghost area. The attacker chooses downsampling as a
heuristic approach because he/she is not aware of the classification algorithm.
We present in Fig.~\ref{fig:ds_eg} some
examples of downsampling a STOP sign image.

To mount alteration attacks, in addition to (\ref{eq:ghost_reso}) for
downsampling, the attacker also needs to consider the pixel coordinates
(Eq.~\ref{eq:ghost-coord}) of the ghost because the attacker needs to align the
ghost with the image of the object of interest so that the resulting, combined
image deceives the classifier.

\subsection{GhostImage Attack Evaluation}
\label{sec:nai-eval}

We substantiate camera-aware attacks on an image classification system that we
envision would be used for automated vehicles. Specifically, images, taken by
an Aptina MT9M034 camera~\cite{aptina}, are fed to a traffic sign image
classifier trained on the LISA dataset~\cite{mogelmose2012vision}. In
Sec.~\ref{sec:ext_eval}, we will evaluate classification systems for other
applications, with different cameras and different datasets.

\subsubsection{Dataset and neural network architecture}
In order to train an unbiased classifier, we selected eight traffic signs from
the LISA dataset~\cite{mogelmose2012vision} (Table~\ref{tab:lisa} in
Appendix~\ref{sec:nntables}). The network architecture
(Table~\ref{tab:lisa_arch} in Appendix~\ref{sec:nntables}) is identical to
\cite{eykholt2018robust}.
We used $80\%$ of samples from the balanced dataset to train the network and
the rest $20\%$ to test the network; it achieved an accuracy of $96\%$.

\SetInd{0.25em}{0.7em}
\begin{algorithm}[b]
	\caption{GhostImage Attack evaluation procedure ($m$: number of classes)}
	\label{alg:naive}

	Initialize $q_d = 0$, for $1\le d\le 5$\\
	\For{$d = 1 \cdots 5$} {
		Place camera $d$ meters away from projector\\
		\For{$i=1 \cdots m$} {
			Place printed traffic sign of class $i$ at background\\
			\For{$j = 1 \cdots m$} {
				\textbf{if} $i=j$ \textbf{then} continue\\
				$Y \leftarrow$ $k$ randomly picked images of class $j$\\
				\For{$y \in Y$}{
					Downsample $y$ according to
					Eq.~\ref{eq:ghost_reso}\\
					Project and crop out $y'$\\
					\If{$y'$ is classified as $j$} {
						$q_d \leftarrow q_d + 1$\\
					}
				}
			}
		}
	}
	Success rates: $q_d \leftarrow q_d/(km^2)$, for $1\le d\le 5$\\

\end{algorithm}

\subsubsection{Evaluation methodology} The evaluation procedure for alteration
attacks is detailed in Algorithm~\ref{alg:naive} in which we iterated five
distances, $m$ source classes, $m$ target classes. For each target class, we
sampled $k$ images randomly from the dataset. For every combination, we first
downsampled the target image based on (\ref{eq:ghost_reso}), and projected the
image at the camera using the NEC projector. We then took the captured image,
cropped out the ghost area, and used the classifier to classify it. If the
classification result is the target class, we count it as a successful attack.
The procedure for creation attacks is slightly different: Rather than printed
traffic signs, we placed a blackboard as the background as it helped us locate the
ghosts. Given a throwing radio of $20$ (thanks to the Canon lens) we evaluated
five different distances from one meter to five meters. Based on
(\ref{eq:ghost_reso}), they resulted in $32\times 32$, $16\times 16$, $8\times
8$, $4\times 4$, and $2 \times 2$ resolutions, respectively.

\begin{table}[b]
	\centering
	\caption{Camera-aware attack success rates}
	\label{tab:naive-rates}
	\rowcolors{5}{}{gray!10}
	\begin{tabular}{ccccc}
		\toprule
		Distances & \multicolumn{2}{c}{Creation Attacks} & \multicolumn{2}{c}{Alteration Attacks} \\
		\cmidrule(lr){2-3} \cmidrule(lr){4-5}
		(meter) & Digital & Perception & Digital & Perception\\
		\midrule
		1 & 98\% & 41\% & 95\% &  33\% \\
		2 & 98\% & 36\% & 88\% &  33\% \\
		3 & 80\% & 34\% & 67\% &  34\% \\
		4 & 36\% & 15\% & 28\% &  10\% \\
		5 & 14\% & 10\% & 13\% &  0\%  \\ 
		\bottomrule
	\end{tabular}
\end{table}

\begin{figure}[tb]
    \centering
	\begin{subfigure}[b]{0.45\columnwidth}
   	  \centering
	  \includegraphics[width=\columnwidth]{./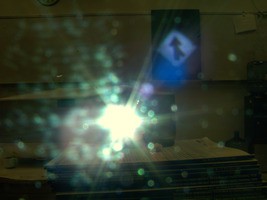}
	\end{subfigure}
	\begin{subfigure}[b]{0.45\columnwidth}
   	  \centering
	  \includegraphics[width=\columnwidth]{./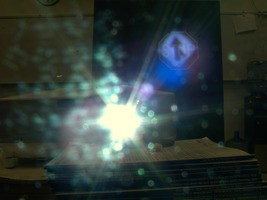}
	\end{subfigure}

	\caption{Camera-aware attack examples at one meter in perception
	domain. Left: Creating a Merge sign. Right: Altering a STOP sign (in
	the background) into a Merge sign.}

	\label{fig:naive_eg}
\end{figure}

\subsubsection{Results}

The results about attack success rates of camera-aware attacks at varying
distances are shown in Table~\ref{tab:naive-rates} (Fig.~\ref{fig:naive_eg}
illustrates two successful camera-aware attacks). For the digital domain, we
simply added attack images $\Delta$ on benign images $x$ as
$y=(x+\Delta)/\|x+\Delta\|_\infty$. Based on these experiments, we observe:
First, as the distance increases, the success rate decreases. This is because
lower-resolution images are less well recognized by the classifier.  Second,
digital domain results are  better than perception domain one, because images
are distorted by the projector-camera channel effects. Third, creation attacks
result  in higher success rates than alteration attacks do because in
alteration attacks there are benign images in the background, encouraging the
classifier to make correct classifications. We will address these issues in the
next section, so as to increase the overall attack success rate. 

\section{System-aware GhostImage Attacks}
\label{sec:adv-attacks}

There are some limitations of the camera-aware attack  introduced in the
previous section. First, increasing distances results in lower success rates
because the classifier cannot recognize the resulting low-resolution images.
Second, there are large gaps between digital domain results and perception
domain results, as channel effects (which cause the inconsistency between the intended pixels and the perceived pixels) are not taken into account. In this section, we resolve these limitations and improve GhostImage attacks' success rates by proposing a framework which consists of a channel model that predicts the pixels perceived by the camera, given the pixels as input to the projector, as well as an optimization formulation based on which the attacker can solve
for optimal attack patterns that cause misclassification by the target
classifier with high confidence.

\subsection{Technical Challenges}
\label{sec:adv-cha}

First, the injected pixel values are often difficult to control as they exhibit
randomness due to variability of the channel between the projector and the
camera, thus the adversary is not able to manipulate each pixel
deterministically. Second, to achieve optimal results, the attacker needs to
precisely predict the projected and perceived pixels, thus channel effects must
be modeled in an end-to-end manner, i.e., considering not only the physical channel
(air propagation), but also the internal processes of the projector and the camera.
Lastly, the resolution of attack patterns is limited by distances and projector
lens (Eq.~\ref{eq:ghost_reso}), thus the ghost patterns must be carefully
designed to fit the resolution with few degrees of freedom.

\subsection{System-aware Attack Overview}

The system-aware attacker aims to find optimal patterns that can cause
misclassification by the target classifier with high confidence by taking advantage of the
non-robustness of the classifier \cite{szegedy2014intriguing}. We adopt an adversarial example-based
optimization formulation into GhostImage attacks, in which the attacker tries to solve
\begin{equation}
	\Delta^* = \argmin_\Delta \|\Delta\|_p + c\cdot\LL(y, t, \theta),
	\label{eq:main}
\end{equation}
where $\Delta$ is the digital attack pattern as input to the projector, $y$ is
the perceived image of the object of interest under attacks, $t$ is the target class, and
$\theta$ represents the targeted neural network.
$\|\cdot\|_p$ is an $\ell_p$-norm that measures the magnitude of a
vector, and $\LL$ is a loss function indicating how (un)successful $\Delta$ is.
Here, we aim to minimize the power of the projector required for a successful attack, meanwhile
maximizing the successful chance of attacks. The relative importance of these
two objectives  is balanced by a constant  $c$. In Sec.~\ref{sec:adv-patterns},
we will detail (\ref{eq:main}) in terms of how we handle $\Delta$ being a
non-negative random tensor that is also able to depict grid-style patterns in
different resolutions.

More importantly, in (\ref{eq:main}) $y$ is the final perceived image used as
input to the
classifier, which is estimated by our channel model in an end-to-end style
(Fig.~\ref{fig:channel_pipe}), in which $\delta$ \footnote{Different than
$\Delta$ which is a $w\times h\times
3$ tensor, $\delta$ is a single pixel with dimension $3\times 1$ for the
convenience of the analysis.} is the input to
the projector, and $y$ is the resulting image captured by the camera.
The model can be formulated as
\begin{equation}
	y = g\left(h_f(\Delta) + h_o(x)\right).
	\label{eq:main-ch}
\end{equation}
where $h_f(\Delta)$ is the ghost model that estimates the perceived adversarial
pixel values in the ghost. For simplicity we let $h_o(x)=x$ because the
attacker possesses  same type of the camera so that $x$ can be obtained a
priori, and $g(\cdot)$ is the auto exposure control that adjusts the
brightness. Sec.~\ref{sec:channel-model} introduces the derivation of
(\ref{eq:main-ch}).

Next, we will first present the channel model, and then
formulate the optimization problem for finding the optimal adversarial ghost
patterns.

\begin{figure}[t]
	\centering
	\begin{tikzpicture}

\node (dac) [draw, align=center, anchor=center] at (10, 6) {DAC};
\node (x) [align=center, anchor=center] at (8, 6) {$\delta$};
\node (xo) [align=center, anchor=center] at (13.3, 4) {$x$};
\node (amp) [text width=30, draw, align=center, anchor=center] at (12, 6) {Analog Amp};
\node (pa) [align=center, anchor=center] at (13.3, 6) {$P_a$};
\node (pls) [align=center, anchor=center] at (12, 4) {$\oplus$};
\node (adc) [draw, align=center, anchor=center] at (10.9, 4) {ADC};
\node (exp) [text width=45, draw, align=center, anchor=center] at (7.5, 4) {Exposure Control};
\node (y) [align=center, anchor=center] at (5.7, 4) {$y$};

\draw[->] (x) -- (dac) node [midway, above] {0101};
\draw[->] (pa) -- (amp);
\draw[->] (dac) -- (amp) node [midway, above] {$\sim$};
\draw[->] (amp) -- (pls) node [pos=.8, right] {$\I$} node [midway, left] {$d$};
\draw[->] (pls) -- (adc) node [midway, above] {\Large $\sim$};
\draw[->] (xo) -- (pls);
\draw[->] (adc) -- (exp) node [midway, above] {0101} node [midway, below] {$h_f(\delta)+h_o(x)$};
\draw[->] (exp) -- (y);

\begin{pgfonlayer}{bg}
\draw[draw=none, fill=red!10] (9, 5.3) rectangle (13, 6.7) node[pos=0, anchor=west] {Projector};
\draw[draw=none, fill=blue!10] (6.4, 3.3) rectangle (12.5, 4.7) node[pos=0, anchor=west] {Camera};
\end{pgfonlayer}

\end{tikzpicture}
	\caption{Projector-camera channel model}
	\label{fig:channel_pipe}
\end{figure}
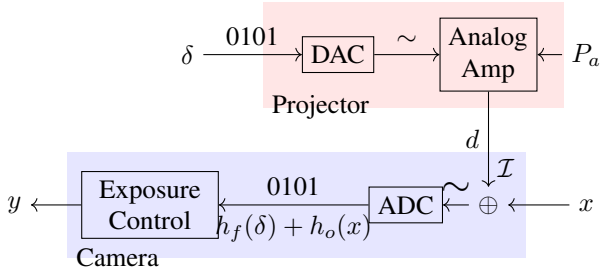

\subsection{Projector-Camera Channel Model}
\label{sec:channel-model}

We consider the projector to camera channel model (Fig.~\ref{fig:channel_pipe})
in which $\delta$ is an RGB value the attacker wishes to project
which is later converted to an analog color by the projector. The attacker can
control the power ($P_a$) of the light source of the projector so that the
luminescence can be adjusted. The targeted camera is situated at a distance of
$d$, which captures the light coming from both the projector and reflected off
the object ($x$). The illuminance received by the camera from the projector is
denoted as $\I$. The camera converts analog signals into digital ones, based on
which it adjusts its exposure, with the final RGB value being $y$. An ideal
channel would yield $y=x+\delta$ but due to channel effects, we need to find a
way to adjust the projected RGB value such that the perceived RGB value is as intended.

\subsubsection{Exposure control}
\label{sec:exp_ctrl}

As we   discussed in Section~\ref{sec:flare}, cameras are usually equipped
with auto-exposure control, where according to the overall brightness of the
image, the camera adjusts its exposure by changing the exposure time, or the
size of its aperture, or both. We observed from our experiments that, as we
increase the luminescence of the projector ($\I$), in the image the brightness
of the object ($x$) decreases but the ghost ($\delta$) does not decrease as
much. Modeling such phenomena helps the attacker  to precisely predict
the perceived image.
For the following, we will first find out how the
illuminance $\I$ depends on $\delta$ and $P_a$ (the normalized power of light
bulb ranging from $0\%$ to $100\%$), and then  how $y$ depends on $\I$.

\begin{figure}[tb]
	\centering
\begin{tikzpicture}

\definecolor{color0}{rgb}{1, 0, 0}
\definecolor{color1}{rgb}{0, 1, 0}
\definecolor{color2}{rgb}{0, 0, 1}

\begin{groupplot}[group style={group size=2 by 1, horizontal sep=.5cm}]
\nextgroupplot[
width=.5\columnwidth, height=4cm,
legend cell align={left},
legend style={at={(.5, 1.05)}, anchor=south},
tick align=outside,
tick pos=left,
x grid style={white!69.01960784313725!black},
xlabel={$T_d$},
xmajorgrids,
xmin=-0.05, xmax=1.05,
xtick style={color=black},
y grid style={white!69.01960784313725!black},
ylabel={Illuminance (lux), $\I$},
ymajorgrids,
ymin=-59.508678961411, ymax=1259.97660376007,
ytick style={color=black}
]
\addplot [semithick, blue!50, mark=o, mark size=1.5, mark options={solid}, only marks]
table {%
0 333
0.1 503
0.2 700
0.3 900
0.4 1170
0.5 1177
0.6 1180
0.7 1178
0.8 1180
0.9 1180
1 1190
};
\addlegendentry{$T_a=1$}
\addplot [semithick, green!50, mark=triangle, mark size=1.5, mark options={solid}, only marks]
table {%
0 4
0.1 13
0.2 69
0.3 185
0.4 360
0.5 575
0.6 820
0.7 1040
0.8 1160
0.9 1160
1 1160
};
\addlegendentry{$T_a=0.5$}
\addplot [semithick, red!50, mark=square, mark size=1.5, mark options={solid}, only marks]
table {%
0 5
0.1 5
0.2 5
0.3 5
0.4 5
0.5 18
0.6 88
0.7 215
0.8 405
0.9 625
1 855
};
\addlegendentry{$T_a=0$}
\begin{pgfonlayer}{fg}
\addplot [black!65, thick]
table {%
0 0.467924798656182
0.1 1.14147251842015
0.2 2.78230014740295
0.3 6.7684450304278
0.4 16.3872893948673
0.5 39.2263911840582
0.6 91.4374992173617
0.7 201.102862048224
0.8 395.384233688647
0.9 654.396716137282
1 894.462418934312
};
\addlegendentry{Fit: $\I$}
\addplot [black!65, thick]
table {%
0 13.2190370549237
0.1 31.761007686639
0.2 74.6747803464992
0.3 167.269325345628
0.4 339.989950983076
0.5 589.291271669944
0.6 842.347027095143
0.7 1022.18549702437
0.8 1120.16623411167
0.9 1165.95505614204
1 1185.81419032398
};
\addplot [black!65, thick]
table {%
0 289.563572811766
0.1 524.437107787305
0.2 785.462076561601
0.3 986.659615997931
0.4 1102.34594082624
0.5 1157.97214931296
0.6 1182.41766106028
0.7 1192.73365710878
0.8 1197.01229510865
0.9 1198.77413365032
1 1199.49745869188
};
\end{pgfonlayer}

\nextgroupplot[
width=.5\columnwidth, height=4cm,
legend cell align={left},
legend style={at={(.5,1.05)}, anchor=south},
tick align=outside,
tick pos=left,
x grid style={white!69.01960784313725!black},
xlabel={$T_a$},
xmajorgrids,
xmin=-0.05, xmax=1.05,
xtick style={color=black},
y grid style={white!69.01960784313725!black},
ymajorgrids,
ymin=-59.508678961411, ymax=1259.97660376007,
yticklabel=\empty
]
\addplot [semithick, blue!50, mark=o, mark size=1.5, mark options={solid}, only marks]
table {%
0 880
0.1 1060
0.2 1170
0.3 1200
0.4 1200
0.5 1200
0.6 1200
0.7 1200
0.8 1200
0.9 1200
1 1200
};
\addlegendentry{$T_d=1$}
\addplot [semithick, green!50, mark=triangle, mark size=1.5, mark options={solid}, only marks]
table {%
0 17
0.1 60
0.2 133
0.3 270
0.4 415
0.5 580
0.6 770
0.7 945
0.8 1120
0.9 1180
1 1190
};
\addlegendentry{$T_d=0.5$}
\addplot [semithick, red!50, mark=square, mark size=1.5, mark options={solid}, only marks]
table {%
0 4.4
0.1 4.3
0.2 4.2
0.3 4.5
0.4 4.3
0.5 5
0.6 10
0.7 36
0.8 100
0.9 200
1 330
};
\addlegendentry{$T_d=0$}
\begin{pgfonlayer}{fg}
\addplot [black!65, thick]
table {%
0 0.467924798656182
0.1 0.914423515941336
0.2 1.78634148185803
0.3 3.4872297905456
0.4 6.79845216953555
0.5 13.2190370549237
0.6 25.5733807939282
0.7 48.9972418992649
0.8 92.1919099365591
0.9 167.910764447927
1 289.563572811766
};
\addlegendentry{Fit: $\I$}
\addplot [black!65, thick]
table {%
0 39.2263911840582
0.1 74.363850937781
0.2 137.254369283627
0.3 241.902149101848
0.4 396.564533191206
0.5 589.291271669944
0.6 784.254143226216
0.7 944.013852290599
0.8 1053.82485754827
0.9 1120.49712680614
1 1157.97214931296
};
\addplot [black!65, thick]
table {%
0 894.462418934312
0.1 1021.51060866282
0.2 1101.54479524536
0.3 1147.53502659646
0.4 1172.57719104161
0.5 1185.81419032398
0.6 1192.70145609723
0.7 1196.25548644095
0.8 1198.08166312184
0.9 1199.01795558314
1 1199.49745869188
};
\end{pgfonlayer}
\end{groupplot}

\end{tikzpicture}

	\caption{Illuminance depends on the RGB amplitude $T_d$, and the light bulb
	intensity $T_a$.}

	\label{fig:illuminance}
\end{figure}
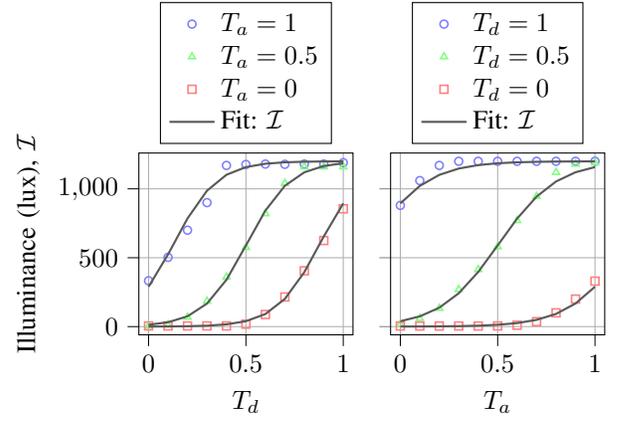

\paragraph{How does $\I$ depend  on $\delta$ and $P_a$?}
We conducted a series of
experiments, where $T_d =
\|\delta\|_\infty=\max_i \delta_i$ and $P_a$ were varied.  We recorded the
illuminance directly in front of the camera using an illuminance
meter~\cite{drmeter}, with the projector one meter away. The
results are plotted in Fig.~\ref{fig:illuminance}, which shows
that
\begin{equation}
	\I(T_d, P_a, d) = \frac{c_d}{d^2} \cdot \frac{\I_\text{max}}{1+e^{-t}},
	\label{eq:ill}
\end{equation}
where $t = a \times T_d + b \times P_a + c_t$, and
$a$, $b$, $c_d$ and $c_t$ are constants derived from the data.
$\I_\text{max}$ is the maximum illuminance of the projector at a
distance of one meter. Such a sigmoid-like
function captures the luminescence saturation property of the projector hardware.

\paragraph{How does the perceived $x$ depend  on $\I$?} In the same experiments we also
recorded the RGB value of the ghost ($\delta$) with a blackboard as background
(in order to reduce ambient impacts), and a piece of white paper ($x$) that was
also on the blackboard but did not overlay with the ghost. Their data are
shown in Fig.~\ref{fig:ill-vs-dim}, from
which we can derive
the \emph{dimming ratio} that measures the change of exposure/brightness:
\begin{equation}
	\gamma(\I) = \frac{\I_\text{env}}{\I+\I_\text{env}},
	\label{eq:gamma}
\end{equation}
where $\I_\text{env}$ is the ambient lighting condition in illuminance which
differs from indoors to outdoors for instances. From this equation, we see that
in an environment with static lighting condition, as the luminescence of the
projector increases, the dimming ratio decreases, hence the objects become darker.
With (\ref{eq:gamma}), the adversary is able to conduct real-time attacks by
simply plugging in the momentary $\I_\text{env}$.

\paragraph{How does the perceived $\delta$ depend  on $\I$?} When
$x=0$, $\|y_f\|=\|y_f\|_\infty$ (the lower subplot of
Fig.~\ref{fig:ill-vs-dim}) depends on $\I$ in two ways:
\[\|y_f\|(\I) = \gamma(\I) \cdot \rho \cdot \I.\] On one hand, the last term $\I$
increases the intensity of ghosts, but on the other hand the dimming ratio
$\gamma(\I)$ dims down ghost, whereby  $\rho$ is a trainable constant. With this, we
can rewrite the perceived flare as \[y_f = \|y_f\| H_c
\frac{\delta}{\|\delta\|},\] where $H_c$ is the color calibration matrix to deal with
color distortion, which will be
discussed in Section~\ref{sec:col-cal}. The term $1/\|\delta\|$ normalizes
$\delta$. In the end, we have the channel model
\begin{equation}
	y = \gamma(\I)\left(\rho\I H_c\frac{\delta}{\|\delta\|} + x\right).
	\label{eq:exp_ctrl}
\end{equation}
Compared to (\ref{eq:main-ch}), \[h_f(\delta)=\rho\I
H_c\frac{\delta}{\|\delta\|}, \quad g(t)=\gamma(\I)t, \quad h_o(x)=x.\]
With (\ref{eq:exp_ctrl}), the attacker is able to predict how bright
and what colors/pixel values the ghost and the object will be, given the projected pixels,
the power of the projector, and the distance.

\begin{figure}[t]
	\centering
	\input{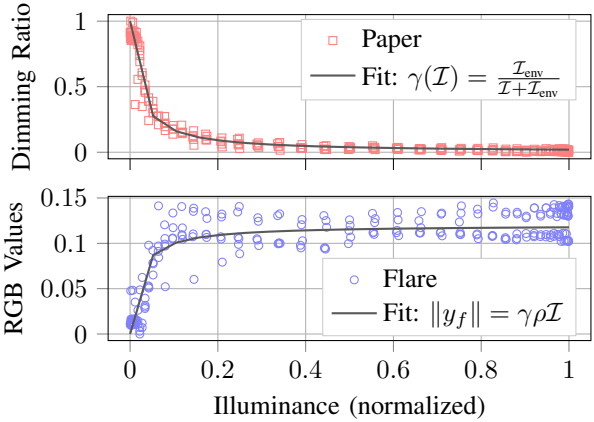}
	\caption{Perceived RGB values v.s. illuminance.}
	\label{fig:ill-vs-dim}
\end{figure}

\subsubsection{Color calibration}
\label{sec:col-cal}

Considering a dark background (i.e., $x=0$), (\ref{eq:exp_ctrl}) can
be simplified as $y = \gamma(\I)\rho\I H_c \delta/\|\delta\|$, where $H_c$ is a
$3\times 3$ matrix (as three color channels) that calibrates colors. Both $y$
and $\delta$ are $3\times 1$ column vectors.  $H_c$ should be an identity
matrix for an ideal channel, but due to the color-imperfection of both the
projector and the camera, $H_c$ needs to be learned from data. To simplify
notations, we define corrected $x$ and $y$ as
\[\hat{x}=\frac{\delta}{\|\delta\|}, \quad \hat{y} = \frac{y}{\rho\I\gamma(\I)},\] so that we
can write \[\hat{y} = H_c\hat{x}.\]

We did another set of experiments where we collected $n=100$ pairs of
$(\hat{x}, \hat{y})$ with dark background (to make $x=0$), with $\delta$ being assigned
randomly, and $P_a=30\%$. We grouped them into $X$ and $Y$: \[X =
\left[\hat{x}_1^\top, \hat{x}_2^\top, \cdots, \hat{x}_n^\top\right]^\top, \quad
Y = \left[\hat{y}_1^\top, \hat{y}_2^\top, \cdots, \hat{y}_n^\top\right]^\top,\]
where both $X$ and $Y$ are $n\times 3$ matrices.  We compute
$H_c$ by solving \[\min_{H_c}\|Y-XH_c\|_2^2.\] This is known as a
non-homogeneous least square problem~\cite{szeliski2010computer}, and it has a
closed-form solution: \[H_c = \left((X^\top X)^{-1}X^\top Y\right)^\top.\]
Plugging $H_c$ back to (\ref{eq:exp_ctrl}) completes our channel model.

\subsubsection{Model validation}

Fig.~\ref{fig:ch_mod_eg} demonstrates the accuracy
of our channel model. In it the left image is the original input to the
projector, the middle image is the estimated output from the camera based on
our channel model (Eq.~\ref{eq:exp_ctrl}), and the image on the right is the
actual image in a ghost captured by the camera. As can be seen, the difference
between the actual and predicted is much less than the actual and original.
While blurring effect is apparent in the actual $y$, we do not model it but the
success rates are still high despite it. As we will see in
Section~\ref{sec:ext_eval}, our channel model is general enough that once
trained on one camera in one environment, it can be transferred to different
environments and different cameras without retraining.

\begin{figure}[t]
    \centering
	\begin{subfigure}{0.3\columnwidth}
	  \centering
	  \includegraphics[width=.5\columnwidth]{./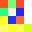}
	  \caption{Input $x_f$}
	\end{subfigure}
	\begin{subfigure}{0.3\columnwidth}
	  \centering
	  \includegraphics[width=.5\columnwidth]{./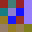}
	  \caption{Estimated $y$}
	\end{subfigure}
	\begin{subfigure}{0.3\columnwidth}
	  \centering
	  \includegraphics[width=.5\columnwidth]{./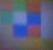}
	  \caption{Actual $y$}
	\end{subfigure}
	\caption{An example of channel model prediction}
	\label{fig:ch_mod_eg}
\end{figure}

\subsection{Optimal Adversarial Projection Patterns}
\label{sec:adv-patterns}

In long-distance, low-resolution GhostImage attacks there are only a few pixels
in the ghost area. A camera-aware attacker's strategy is to simply downsample
attack images into low resolutions, but that does not result in high success
rates.  While (\ref{eq:main}) is abstract, for the rest of this subsection, we
will progressively detail it and show how it can be solved in light of the
channel model to improve attack success rates. We will start with the simplest
case where adversarial perturbations are random noise (Sec.~\ref{sec:adv-cha}).
Then, single-color ghosts will be introduced. Later, we will consider how to
find semi-positive additive noise due to the fact that superposition can only
increase perceived light intensity but not decrease it. Finally, we examine the
optimization problem to find optimal ghost patterns in grids at different
resolutions.

\subsubsection{Single-color ghost}

Let us consider the simplest case first where the random noise $\Delta$ is
drawn from one single Gaussian distribution for all three channels, i.e.,
$\Delta \sim \mathcal{N}(\mu, \sigma^2)$, where the size of $\Delta$ is $w
\times h \times 3$ with $w$ and $h$ representing the width and height of the benign
image $x$. This is because the values of each pixel  that
appear in the ghost area follow Gaussian distributions according to statistics
obtained from our experiments. The adversary needs to find $\mu$ and $\sigma$ such
that when $\Delta$ is added to the benign image $x$, the resulting image $y$
will be classified as the target class $t$. That said, the logits value
(Section~\ref{sec:nn-adv}) of the target class should be as high as possible
compared with the logits values of other classes~\cite{carlini2017towards}.
Such a difference is measured by the loss function $\LL(y, t)$
\begin{equation}
	\LL(y, t) = \max\left\{-\kappa,
	\max_{i: i \ne t} \{\E[Z_i(y)]\} - \E[Z_t(y)]\right\},
	\label{eq:l1}
\end{equation}
where $\E[Z_i(y)]$ is the expectation of logits values at class $i$ of input
$y$. Term $\max_{i: i \ne t} \{\E[Z_i(y)]\}$ is the highest expected logits value
among all the classes except the target class $t$, while $\E[Z_t(y)]$ is the
expected logits value of $t$.  Here, $\kappa$ controls the logits gap between
$\max_{i: i \ne t} \{\E[Z_i(y)]\}$ and $\E[Z_t(y)]$; the larger the $\kappa$
is, the more confident that $\Delta$ is successful. The attacker needs $\LL$ as
low as possible so that the neural network would classify $y$ as Class $t$.
Most importantly, $y$ is computed based on our channel model
(Eq.~\ref{eq:exp_ctrl}),
so that the optimizer finds the optimal
ghost patterns that are resistant to the channel effects.
Unfortunately, due to the complexity of neural networks, the
expectations of logits values $\E[Z_i(y)]$ are   hard to be expressed analytically; we
instead use Monte Carlo methods to approximate it: 
\[
	\hat{\E}[Z_i(y)] = \frac{1}{T} \sum_{j=1}^T Z_i(y_j),
\]
where $T$ is the number of trials, and $y_j$ is of the $j$-th
trial.

Meanwhile, the adversary also needs to minimize the magnitude of $\Delta$ to
reduce the attack power and noticeability, as well as its peak energy
consumption, quantified by $\sigma$. The expectation of the magnitude of
$\Delta$ is
\begin{equation}
	\E[\|\Delta\|_p] = \mu n^{1/p}, \quad \text{with}~~n=3wh.
	\label{eq:e1}
\end{equation}
Putting (\ref{eq:l1}) and (\ref{eq:e1}) together with a tunable
constant $c$, we have our optimization problem for the simplest case
\[
\begin{aligned}
	\mu^*, \sigma^* =~~&\argmin_{\mu,\sigma} & & \E[\|\Delta\|_p] + \sigma  + c
	\cdot \LL(y, t), \\
	& \subjto & & \sigma > \sigma_l,
\end{aligned}
\]
Here,
$\sigma_l$ is the lower bound of the standard deviation $\sigma$, meaning that
the interference generator and the channel environment can provide random noise
with the standard deviation of at least $\sigma_l$. When $\sigma_l=0$, the adversary
is able to manipulate pixels deterministically. Therefore, when we fix $\sigma$
as $\sigma_l$ in the optimization
problem, the attack success rate when deploying $\mu^*$ would be the lower
bound of the attack success rate. In other words, the adversary equipped with
an attack setup that can produce noise with a lower variance (than $\sigma_l^2$) can carry out attacks with
higher success rates. Therefore, we can simplify our formulation by removing the
constraint about $\sigma$, so the optimization problem becomes
\begin{equation}
	\mu^* = \argmin_\mu \E[\|\Delta\|_p]  + c \cdot \LL(y, t).
	\label{eq:m1}
\end{equation}
For the rest of the paper we will simply use $\sigma$ to denote $\sigma_l$.

Since in (\ref{eq:m1}) there is only one variable that the adversary
is able to control, it is infeasible to launch a targeted attack with
such few degrees of freedom. As a result, the adversary needs to manipulate
each channel individually. That is, for each channel, there will be an
independent distribution from which noise will be drawn. This is feasible
because noise can appear in different colors in the ghost areas in which three
channels are perturbed differently when using projectors.
Let us decompose $\Delta$ as $\Delta = [\Delta_R, \Delta_G, \Delta_B]$, where
the dimension of $\Delta_{\{R, G, B\}}$ is $w\times h$, and they follow three
independent Gaussian distributions \[\Delta_R \sim \mathcal{N}\left(\mu_R,
\sigma^2_R\right),~~\Delta_G \sim \mathcal{N}\left(\mu_G,
\sigma^2_G\right),~~\Delta_B \sim \mathcal{N}\left(\mu_B,
\sigma^2_B\right).\] Here, $\mu_{\{R, G, B\}}$ and $\sigma_{\{R, G, B\}}$ are
the means and the standard deviations ($\sigma$) of the three Gaussian distributions,
respectively. The expectation of such $\Delta$ is then
\begin{equation}
	\E[\|\Delta\|_p] = \left[\frac{n}{3}\left(\mu_R^p + \mu_G^p +
	\mu_B^p\right)\right]^{\frac{1}{p}}.
	\label{eq:e2}
\end{equation}
(\ref{eq:e1}) is a special case of (\ref{eq:e2}) when
$\mu=\mu_R=\mu_G=\mu_B$. We denote $\bmmu
= [\mu_R, \mu_G, \mu_B]^\top$. Hence, similar to (\ref{eq:m1}), we have the
optimization problem for single-color perturbation
\begin{equation}
	\bmmu^* = \argmin_{\bmmu} \E[\|\Delta\|_p]  + c \cdot \LL(y, t),
	\label{eq:m2}
\end{equation}
by which the adversary finds the optimal $\bmmu^*$ from which $\Delta$ is drawn.

\subsubsection{Non-negative noise constraint}

\begin{figure}[tb]
	\centering
	\input{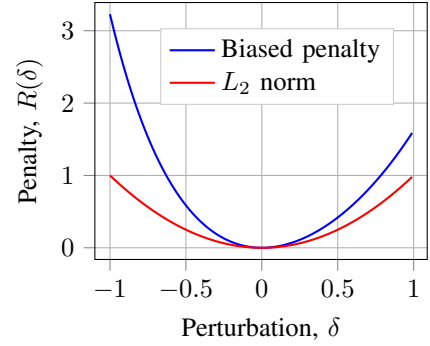}
	\caption{Biased penalty}
	\label{fig:biased_penalty}
\end{figure}

(\ref{eq:m2}) must be solved with the constraint $\Delta \ge 0$
because the adversary can only increase the light. Rather than explicitly place
a constraint in (\ref{eq:m2}), we propose to punish negative values, by intruducing
\emph{biased penalty}
\begin{equation}
	R(\Delta) = e^{-\alpha (\Delta - \omega)} + e^{\beta (\Delta - \omega)} - \eta,
	\label{eq:bia}
\end{equation}
where \[\omega = \frac{\ln \alpha - \ln \beta}{\alpha + \beta},\quad
\eta = \left(\frac{\alpha}{\beta}\right)^\frac{-\alpha}{\alpha+\beta} +
\left(\frac{\alpha}{\beta}\right)^\frac{\beta}{\alpha+\beta}.\] Here
$\omega$ is to center the global minimum at $\Delta$ being zero, and
substracting $\eta$ is to lower the minimum to be zero but it does not
change the optimization results so we will omit it.  An instance of
(\ref{eq:bia}) when $\alpha=2$ and $\beta=1$ is plotted in
Fig.~\ref{fig:biased_penalty} in comparison
with the $L_2$ norm. With the same absolute value, while the $L_p$ norm treats
positive perturbation and negative perturbation equally, the biased
penalty function punishes the negative values more than the
positive one, encouraging the optimization algorithm to find positive
$\Delta$.
We adopt it into our optimization
formulation
\begin{equation}
	\bmmu^* = \argmin_{\bmmu} \E[R(\Delta)]  + c \cdot \LL(y, t),
	\label{eq:m3}
\end{equation}
and in the experiments we set $\alpha=8$ and $\beta=2$.

\subsubsection{Ghost grids}

Since projector's pixels are arranged in grids, the attack patterns are in
grids as well, especially in lower resolutions.
We enable $\Delta$ with patterns in different resolutions.
Such a grid pattern $\bmdelta$ can be composed of several blocks $\Delta_{i, j,
k}$, i.e.,
	$\Delta_{i, j, k} :
	\{1\le i \le N_\text{row},
	1\le j \le N_\text{col},
	1\le k \le N_\text{chn}\}$
where $N_\text{row}$, $N_\text{col}$ and $N_\text{chn}$ is the number of rows, columns, and
channels of a grid pattern, respectively, in terms of blocks. In a word,
$\Delta_{i, j, k}$ is the perturbation block at $i$-th row,
$j$-th column and $k$-th channel.
A block $\Delta_{i, j, k}$ is a random matrix
and its size is $\frac{w}{N_\text{col}} \times \frac{h}{N_\text{row}}$,
so that the size of $\bmdelta$ is still $w\times h\times 3$. Besides,
the elements in the random matrix $\Delta_{i, j, k}$ is i.i.d. drawn from a
Gaussian distribution, i.e., $\Delta_{i, j, k} \sim \mathcal{N}(\mu_{i, j, k}, \sigma^2)$.

\begin{figure}[tb]
	\centering
	\begin{subfigure}[b]{.45\columnwidth}
		\centering
		\begin{tikzpicture}[y=-1cm, scale=.79]
\path[draw=black,fill=white] (6.35,3.81) rectangle (11.43,8.89);
\draw[black] (8.89,3.81) -- (8.89,8.89);
\draw[black] (6.35,6.35) -- (11.43,6.35);
\path[draw=black,fill=white] (5.715,4.445) rectangle (10.795,9.525);
\draw[black] (5.715,6.985) -- (10.795,6.985);
\draw[black] (8.255,4.445) -- (8.255,9.525);
\path (10.15,10.00125) node[text=black,anchor=base west] {R};
\path (10.8,9.36625) node[text=black,anchor=base west] {G};
\path (11.4,8.73125) node[text=black,anchor=base west] {B};
\path[draw=black,fill=white] (5.08,5.08) rectangle (10.16,10.16);
\draw[black] (5.08,7.62) -- (10.16,7.62);
\draw[black] (7.62,5.08) -- (7.62,10.16);
\path (5.1,5.55625) node[text=black,anchor=base west] {$\mu_{1,1,1}=255$};
\path (7.77875,5.55625) node[text=black,anchor=base west] {$\mu_{1,2,1}=0$};
\path (5.1,8.09625) node[text=black,anchor=base west] {$\mu_{2,1,1}=0$};
\path (5.87375,4.92125) node[text=black,anchor=base west] {$\mu_{1,1,2}=0$};
\path (6.50875,4.28625) node[text=black,anchor=base west] {$\mu_{1,1,3}=0$};
\path (8.9,4.28625) node[text=black,anchor=base west] {$\mu_{1,2,3}=255$};
\path (8.2,4.92125) node[text=black,anchor=base west] {$\mu_{1,2,2}=255$};
\path (7.77875,8.09625) node[text=black,anchor=base west] {$\mu_{2,2,1}=0$};
\path (5.87375,6.50875) node[text=black,anchor=base west] {$\delta_{1,1,1}$};
\path (5.87375,9.2075) node[text=black,anchor=base west] {$\delta_{2,1,1}$};
\path (8.41375,9.2075) node[text=black,anchor=base west] {$\delta_{2, 2, 1}$};
\path (8.41375,6.50875) node[text=black,anchor=base west] {$\delta_{1,2,1}$};

\end{tikzpicture}%

		\caption{}
		\label{fig:grid}
	\end{subfigure}
	\hspace{1cm}
	\begin{subfigure}[b]{.35\columnwidth}
		\centering
		\begin{tikzpicture}[y=-1cm, scale=.5]
\definecolor{fillColor}{rgb}{0,0.6902,0}
\path[draw=black,fill=fillColor!50] (7.62,7.62) rectangle (10.16,10.16);
\definecolor{fillColor}{rgb}{0.52941,0.80784,1}
\path[draw=black,fill=fillColor!50] (5.08,7.62) rectangle (7.62,10.16);
\path[draw=black,fill=red!50] (5.08,5.08) rectangle (7.62,7.62);
\path[draw=black,fill=cyan!50] (7.62,5.08) rectangle (10.16,7.62);
\path (8,6.50875) node[text=black,anchor=base west] {$\mu_{1,2}$};
\path (5.5,9.04875) node[text=black,anchor=base west] {$\mu_{2,1}$};
\path (8,9.04875) node[text=black,anchor=base west] {$\mu_{2,2}$};
\path (5.5,6.50875) node[text=black,anchor=base west] {$\mu_{1,1}$};

\end{tikzpicture}%

		\caption{}
		\label{fig:grid_merged}
	\end{subfigure}
	\caption{A grid pattern when $N_\text{row}=N_\text{col}=2$ and $N_\text{chn}=3$.
	(a) $\bmmu$ is a three-dimensional matrix. (b) The resulting perturbation
	pattern.}
	\label{fig:grid-grid}
\end{figure}
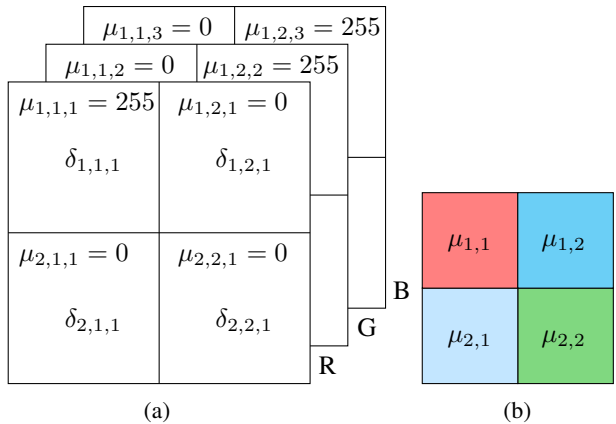

The adversary can find the optimal grid pattern $\bmdelta$ by
solving the optimization problem as in (\ref{eq:m3}) in which
\begin{equation}
	\E[R(\Delta)] = \sum_{i=1}^{N_{row}} \sum_{j=1}^{N_{col}} \sum_{k=1}^{N_{chn}}
	e^{-\alpha (\mu_{i,j,k}-\omega)} + e^{\beta (\mu_{i, j, k}-\omega)},
	\label{eq:pen}
\end{equation}
where $\mu_{i, j, k}$ is the mean of the block $\Delta_{i, j, k}$.
See Fig.~\ref{fig:grid} for an illustration of the dimensionality
of $\mu_{i, j, k}$, and
Fig.~\ref{fig:grid_merged} for the resulting pattern in color.

\section{System-aware Attack Evaluation}
\label{sec:ext_eval}

In this section, we consider camera-based image classification systems, as used in self-driving vehicles and surveillance systems, to illustrate the potential
impact of our attacks. We present
proof-of-concept system-aware attacks in terms of \emph{attack
effectiveness}, namely how well system-aware attacks perform in the same setup
as camera-aware attacks (Section~\ref{sec:nai-eval}), and \emph{attack
robustness}, namely how well system-aware attacks are when being evaluated in
different setups.

We will again use attack success rates (Algorithm~\ref{alg:naive}) as our
metric. We used the Adam Optimizer~\cite{kingma2015adam} to solve our
optimization problems. There are two sets of results: \emph{Emulation results}
refer to the classification results on emulated, combined images of benign
images and attack patterns using our channel model
(Equation~\ref{eq:exp_ctrl}). Emulation helps us conduct scalable and fast
evaluations of GhostImage attacks before conducting real-world
experiments\footnote{Source code is available at
\url{https://github.com/Harry1993/GhostImage}}. \emph{Experimental results} refer to
the classification results on the images that are actually captured by the
victim cameras when the projector is on.

\subsection{Attack Effectiveness}
\label{sec:attack-effect}

To compare with camera-aware attacks,
system-aware attacks are evaluated in a similar procedure (Algorithm~\ref{alg:naive}), targeting a camera-based
object classification system with the LISA dataset and its classifier, and the
Aptina MT9M034 camera~\cite{aptina}, in an in-lab environment. %

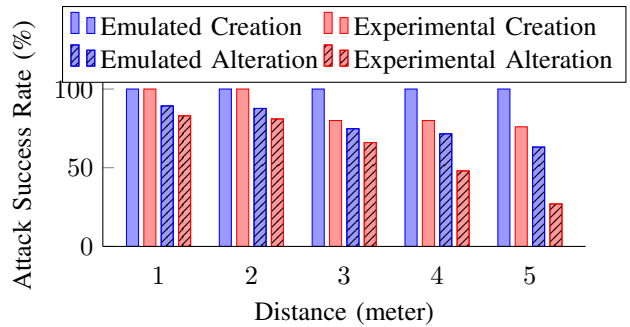
\begin{figure}[t]
	\centering
	\begin{tikzpicture}
  \centering
  \begin{axis}[
        height=4cm,
		ybar,
		bar width=.13,
        axis on top,
		major x tick style = transparent,
    	enlargelimits=0.15,
    	legend style={
			at={(0.5,.9)},
    		anchor=south,
			align=left,
			legend columns=2},
		legend cell align={left},
    	ylabel={\#participants},
		ytick align=inside,
        major grid style={draw=white},
        enlarge y limits={value=.1,upper},
        ymin=0, ymax=105,
        axis x line*=bottom,
        axis y line*=left,
        ylabel={Attack Success Rate (\%)},
		xtick={1, 2, 3, 4, 5},
        xlabel={Distance (meter)},
    ]
    \addplot [draw=blue, fill=blue!40] coordinates {
      (1,100)
      (2,100) 
      (3,100)
      (4,100)
      (5,100) };
   \addplot [draw=red, fill=red!40] coordinates {
      (1,100)
      (2,100) 
      (3,80)
      (4,80) 
      (5,76)};
   \addplot [draw=blue, postaction={pattern=north east lines}, fill=blue!40] coordinates {
      (1,89.28)
      (2,87.67) 
      (3,74.78)
      (4,71.54)
      (5,63.12) };
   \addplot [draw=red, postaction={pattern=north east lines}, fill=red!40] coordinates {
      (1,83)
      (2,81) 
      (3,66)
      (4,48) 
      (5,27)};
    \legend{Emulated Creation, Experimental Creation, Emulated Alteration, Experimental Alteration}
  \end{axis}
\end{tikzpicture}
	\caption{System-aware creation and alteration}
	\label{fig:adv-spoofing}
\end{figure}

\subsubsection{Creation attacks}
\label{sec:exp-creation}

For emulated creation attacks, all distances (or all resolutions) yield attack
success rates of 100\% (Fig.~\ref{fig:adv-spoofing}), which means that our
optimization problem is easy to solve. In terms of computational overhead, we
need roughly \SI{30}{\second} per image at $2\times 2$-resolution, and
\SI{10}{\second} at $4\times 4$ or above (because of more degrees of freedom)
using an NVIDIA Tesla P100~\cite{nvidiap100}. Fig.~\ref{fig:dig_eg} shows
examples of emulated attack patterns for creation attacks, along with the
images of real signs on the top. Interestingly, high-resolution shapes do look
like real signs. For example, we can see two vertical bars for ADDEDLANE, and
also we can see a circle at the middle south for STOPAHEAD, etc. These results
are consistent with the ones from the MNIST
dataset~\cite{papernot2016limitations} where we could also roughly observe the
shapes of digits. Secondly, they are blue tinted because our channel model
suggests that ghosts tend to be blue, thus the optimizer is trying to find
``blue'' attack patterns that are able to deceive the classifier.

Interestingly, the all $k$ resulting patterns of solving the optimization
problem targeting one class from $k$ different (random) starting points look
similar to the ones shown in Fig.~\ref{fig:dig_eg}. However,
CIFAR-10~\cite{krizhevsky2009learning} and ImageNet~\cite{imagenet_cvpr09}
yield much different results: those patterns look rather random compared to the
results from LISA or MNIST. The reason might be that in CIFAR-10, images in the
same category are still very different, such as two different cats, but in
LISA, two images of STOP signs do not look as different as two cats.

For the experimental results of creation attacks, we see that as distances
increase, success rates decrease a little (Fig.~\ref{fig:adv-spoofing}),
but much better than the camera-aware attacks
(Table~\ref{tab:naive-rates}), because the optimization formulation helped find
those optimal attack patterns with high confidence.

\subsubsection{Alteration attacks}

The emulated and experimental results of alteration attacks are shown in
Fig.~\ref{fig:adv-spoofing}. Compared with creation attacks, alteration attacks
perform a bit worse, especially for large distances (three meters or further).
This is because the classifier also ``sees'' the benign image in the background
and tends to classify the entire image as the benign class.  Moreover, the
alignment of attack patterns and the benign signs is imperfect.  However, when
we compare Fig.~\ref{fig:adv-spoofing} with Table~\ref{tab:naive-rates} for
camera-aware alteration attacks, we can see large improvements.
Fig.~\ref{fig:per_eg} provides an example of system-aware alteration attacks in
the perception domain, which were trying to alter the (printed) STOP sign into
other signs: they look ``blue'' as the channel model predicted. The fifth
column is not showing as it is STOP.

A misclassification matrix of emulated alteration attacks at $8\times 8$ is
given in Table~\ref{tab:adv-obf}. The overall attack success rate was 75\%.
Each cell denotes the success rate of altering a benign class (actual) into a
target class (predicted). Most of them are 100\%, but the SCHOOL sign, for
example, was the most difficult to perturb \emph{into} (the 3rd column) and yet
not that hard to perturb \emph{from} (the 3rd row), probably because it is in
green (RGB: 0-255-0) and in an opened-envelope shape, while all the others are
either red (255-0-0) or yellow (255-255-0) colors, and either polygon or
rectangle shapes.

\begin{figure}[t]
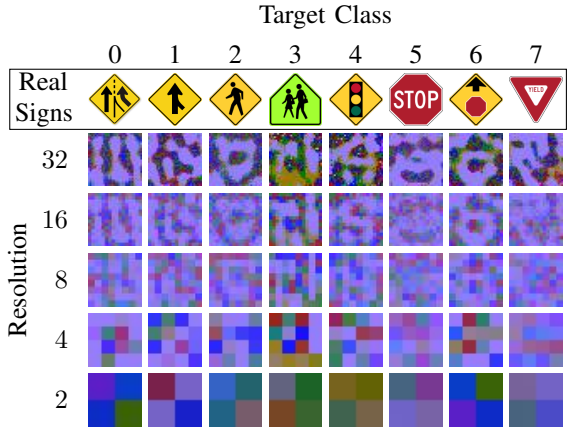

	\begin{subfigure}{\columnwidth}
		\centering
		\input{./figs/dig_eg.tikz}
		\caption{Emulated creation attacks}
		\label{fig:dig_eg}
	\end{subfigure}
	\begin{subfigure}{\columnwidth}
		\centering
		\input{./figs/per_eg.tikz}
		\caption{Experimental alteration attacks}
		\label{fig:per_eg}
	\end{subfigure}

	\caption{System-aware attack pattern examples.}

\end{figure}

\begin{table}[b]
	\centering
	\noindent
	\newcommand\items{8}   %
	\arrayrulecolor{white} %

	\caption{Emulated system-aware obfuscation attacks: misclassification
	matrix at $8\times 8$-resolution.}

	\label{tab:adv-obf}
	\begin{tabular}{cc*{\items}{|E}|}
		\multicolumn{1}{c}{} &\multicolumn{1}{c}{} &\multicolumn{\items}{c}{Predicted} \\ \hhline{~*\items{|-}|}
		\multicolumn{1}{c}{} & 
		\multicolumn{1}{c}{} & 
		\multicolumn{1}{c}{0} & 
		\multicolumn{1}{c}{1} & 
		\multicolumn{1}{c}{2} & 
		\multicolumn{1}{c}{3} & 
		\multicolumn{1}{c}{4} & 
		\multicolumn{1}{c}{5} & 
		\multicolumn{1}{c}{6} & 
		\multicolumn{1}{c}{7} \\
		\hhline{~*\items{|-}|}
		\multirow{\items}{*}{\rotatebox{90}{Actual}} 
		&\rot{0}  & 100   & 0   & 100 & 100 & 100 & 100 & 0 & 100  \\ \hhline{~*\items{|-}|}
		&\rot{1}  & 0   & 100  & 100 & 0 & 100 & 100 & 100 & 100  \\ \hhline{~*\items{|-}|}
		&\rot{2}  & 0   & 100  & 100 & 0 & 100 & 100 & 100 & 100  \\ \hhline{~*\items{|-}|}
		&\rot{3}  & 100  & 0 & 100 & 100 & 100 & 100 & 0 & 97.2  \\ \hhline{~*\items{|-}|}
		&\rot{4}  & 0   & 100  & 100 & 0 & 100 & 100 & 100 & 100  \\ \hhline{~*\items{|-}|}
		&\rot{5}  & 100   & 100  & 90 & 0 & 100 & 100 & 100 & 100  \\ \hhline{~*\items{|-}|}
		&\rot{6}  & 100   & 100  & 100 & 100 & 100 & 100 & 100 & 100  \\ \hhline{~*\items{|-}|}
		&\rot{7}  & 100   & 100  & 0 & 0 & 0 & 100 & 100 & 100  \\ \hhline{~*\items{|-}|}
	\end{tabular}
\end{table}

\subsection{Attack Robustness}

We evaluate the robustness of our attacks in terms of different
datasets, environments, and cameras.

\subsubsection{Different image datasets}

Here we evaluate our system-aware attacks on two other datasets,
CIFAR-10~\cite{krizhevsky2009learning} and ImageNet~\cite{imagenet_cvpr09}, by
emulation only because previous results show that our attack emulation 
yields similar success rates as experimental results.

\paragraph{CIFAR-10} The network architecture and model hyper parameters are
shown in Table~\ref{tab:cifar10-arch} and Table~\ref{tab:cifar10-hyper} in
Appendix~\ref{sec:nntables}, which are identical to \cite{carlini2017towards}.
The network was trained with the distillation
defense~\cite{papernot2016distillation} so that we can evaluate the robustness
of our attacks in terms of adversarial defenses.  A classification accuracy of
$80\%$ was achieved. The evaluation procedure is similar to
Algorithm~\ref{alg:naive}. Results are shown in Fig.~\ref{fig:imagenet_dig}.
The overall trend is similar to the LISA dataset, but the success rates are
higher. The reason might still be the large variation within one class
(Section~\ref{sec:exp-creation}), so that the CIFAR-10 classifier is not as
sure about one class as the LISA classifier is, hence is more vulnerable to our
attacks.

\begin{figure}[t]
	\centering
	\begin{tikzpicture}
  \centering
  \begin{axis}[
        height=4cm,
		ybar,
		bar width=.13,
        axis on top,
		major x tick style = transparent,
    	enlargelimits=0.15,
    	legend style={
			at={(0.5,.9)},
    		anchor=south,
			align=left,
			legend columns=2},
    	ylabel={\#participants},
		ytick align=inside,
        major grid style={draw=white},
        enlarge y limits={value=.1,upper},
        ymin=0, ymax=105,
        axis x line*=bottom,
        axis y line*=left,
        ylabel={Attack Success Rate (\%)},
		xticklabels={(32, 30), (16, 20), (8, 15), (4, 10), (2, 5)},
		xtick={1, 2, 3, 4, 5},
        xlabel={Resolution ($x$, $y$): CIFAR-10: $x\times x$; ImageNet: $y\times y$},
    ]
   \addplot [draw=blue, fill=blue!40] coordinates {
      (1,100)
      (2,100) 
      (3,100)
      (4,100) 
      (5,60)};
   \addplot [draw=red, fill=red!40] coordinates {
      (1,100)
      (2,100) 
      (3,99.98)
      (4,83.89)
      (5,45) };
   \addplot [draw=blue, postaction={pattern=north east lines}, fill=blue!40] coordinates {
      (1,100)
      (2,100) 
      (3,100)
      (4,87.5)
      (5,13) };
   \addplot [draw=red, postaction={pattern=north east lines}, fill=red!40] coordinates {
      (1,100)
      (2,100) 
      (3,96)
      (4,58) 
      (5,10)};
    \legend{CIFAR-10 Creation, CIFAR-10 Alteration, ImageNet Creation, ImageNet Alteration}
  \end{axis}
\end{tikzpicture}
	\caption{System-aware attacks on CIFAR-10 and ImageNet}
	\label{fig:imagenet_dig}
\end{figure}
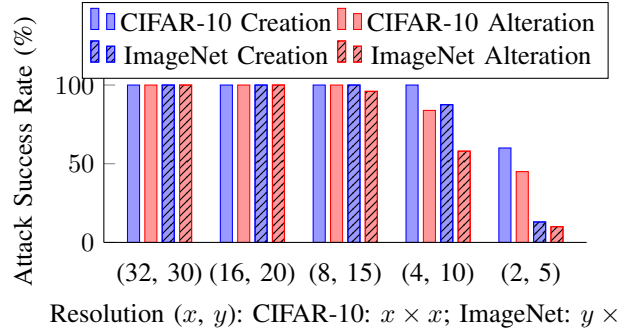

\paragraph{ImageNet} We used a pre-trained Inception V3 neural
network~\cite{szegedy2016rethinking} for the ImageNet dataset to evaluate the
attack robustness against large networks.
Since the pre-trained network can recognize 1000 classes, we did not iterate
all of them (similar to \cite{carlini2017towards}). Instead, for alteration attacks,
we randomly picked ten benign images from the validation set, and twenty random
target classes, while for creation attacks, the ``benign'' images were  
purely black. Results are given in Fig.~\ref{fig:imagenet_dig}. 

For high resolutions ($\ge 15\times 15$), the attack success rates were nearly
100\%.  But as soon as the resolutions went down to $10 \times 10$ or below,
the rates decreased sharply.  The reason might be that in order to mount
successful \emph{targeted} attacks on a 1000-class image classifier, a large
number of degrees of freedom are required. $10 \times 10$ or lower resolutions
plus three color channels might not be enough to accomplish targeted attacks.
To verify this, we also evaluated untargeted alteration attacks on ImageNet.
Results show that when the resolutions are $1\times 1$ or $2\times 2$, the
success rates are 50\% or 80\%, respectively. But as soon as the
resolutions go to $3\times 3$ or above, the success rates reach 100\%.
Lastly, similar to CIFAR-10, system-aware attacks on ImageNet were more
successful than on LISA, because of the high variation within one class.

\subsubsection{Outdoor experiments}

\begin{figure}[t]
	\centering
	\input{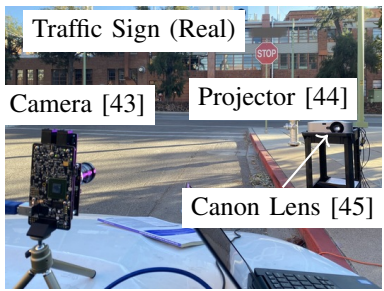}
	\caption{Outdoor experiment setup}
	\label{fig:inveh_setup}
\end{figure}

In order to evaluate system-aware attacks in a real-world environment, we also
conducted experiments outdoor (Fig.~\ref{fig:inveh_setup}), where the
camera was put on the hood of a vehicle that was about to pass an intersection
with a STOP sign. The attacker's projector was placed on the right curb, and it
was about four meters away from the camera. The experiments were done at noon,
at dusk and at night (with the vehicle's front lights on) to examine the
effects of ambient light on attack efficacy. The illuminances were
\SI{4e4}{\lux}, \SI{4e3}{\lux}, and \SI{30}{\lux}, respectively. The
experiments at noon were unsuccessful due to the strong sunlight. Although
more powerful projectors \cite{barco} could be acquired, we argue that a
typical projector is effective in dimmer environments (e.g., cloudy days, at
dawn, dusk, and night, or urban areas where buildings cause shades), which
accounts for more than half of a day. See Sec.~\ref{sec:disc-prac} for more
discussion on ambient lighting conditions.

Results (Tab.~\ref{tab:inveh_obf}) of the other cases show that the success
rates are 30\% lower than our in-lab experiments (the four-meter case from
Fig.~\ref{fig:adv-spoofing}), because we used our in-lab channel model directly
in the road experiments without retraining it, and also the environmental
conditions are more unpredictable. Moreover, the attack rates on
altering some classes (e.g., the STOP sign) into three other signs (e.g.,
YIELD) were 100\%, which is critical as an attacker can easily prevent an
autonomous vehicle from stopping at a STOP sign.

\begin{table}[b]
	\centering
	\arrayrulecolor{black}
	\caption{Outdoor alteration attack success rates}
	\rowcolors{2}{gray!10}{}
	\label{tab:inveh_obf}
	\begin{tabular}{lllll}
		\toprule
		Success rates of & Noon & Dusk & Night &  \\
		\midrule
		Overall & 0\%        & 51\%    & 42.9\%  &  \\
		STOP $\rightarrow$ YIELD     & 0\%        & 100\%   & 100\%   &  \\
		STOP $\rightarrow$ ADDEDLANE & 0\%        & 100\%   & 100\%   &  \\
		STOP $\rightarrow$ PEDESTRIAN & 0\%        & 100\%   & 100\%   &  \\
		\bottomrule
	\end{tabular}
\end{table}

\subsubsection{Different cameras}
\label{sec:diff_cam}

Previously, we conducted GhostImage attacks on Aptina MT9M034 camera
\cite{aptina} designed for autonomous driving. Here, we  evaluate two other
cameras, an Aptina MT9V034~\cite{aptinamt9v} with a simpler lens design, and a
Ring indoor security camera~\cite{ring} for surveillance applications.

\paragraph{Aptina MT9V034} We mounted system-aware creation attacks against the
same camera-based object classification system as in
Section~\ref{sec:attack-effect} but we replaced the camera with the Aptina
MT9V034 camera. Since this camera has a smaller aperture size and also a
simpler lens design than Aptina MT9M034, for a distance of one meter, only
$16\times 16$-resolution attack patterns could be achieved (previously we had
$32\times 32$ at one meter). We did not train a new channel model for this
camera, so the attack success rate at one meter was only 75\%, which is 25\%
lower than the Aptina MT9M034 camera.  As the distances increased up to four
meters, creation attacks yielded success rates as 46.25\%, 33.75\%, and 12.5\%,
respectively.  Another reason why the overall success rate was lower is that
even though the data sheet of Aptina MT9V034~\cite{aptinamt9v} states that the
camera also has the auto exposure control feature, we could not enable the
feature in our experiments.  In other words, system-aware creation attacks did
not benefit from the exposure control. This, on the other hand, indicates
the robustness of GhostImage attacks: Even without taking advantage of exposure
control, the attacks were still effective.

\begin{table}[b]
	\centering
	\arrayrulecolor{black}

	\caption{GhostImage untargeted alteration attacks against Ring camera on
	ImageNet dataset in perception domain}

	\rowcolors{2}{gray!10}{}
	\label{tab:ring}
	\begin{tabular}{llll}
		\toprule
		Index & Benign Class & Rate & Common Prediction \\
		\midrule
		19992 & fur boat   & 100\% & geyser, parachute \\
		21539 & sunglasses & 100\% & screen, microwave \\
		22285 & sunglasses & 100\% & plastic bag, geyser \\
		31664 & sarong     & 100\% & jellyfish, plastic bag \\
		2849  & sweatshirt & 100\% & laptop, candle \\
		26236 & puncho     & 100\% & table lamp \\
		\bottomrule
	\end{tabular}
\end{table}

\paragraph{Ring indoor security camera} We tested GhostImage untargeted attacks
against a Ring indoor security camera~\cite{ring} on the ImageNet dataset. To
demonstrate that our attacks can be applied to surveillance scenarios, we
assume the camera would issue an intrusion warning if a specific object type
\cite{motion-detection} is detected by the Inception V3 neural
network~\cite{szegedy2016rethinking}. The attacker's goal is to change an
object for an intruder class to a non-intruder class. However, we could not
find ``human'', ``person'' or ``people'', etc. in the output classes, we
instead used five human related items (such as sunglasses) as the benign
classes. We found six images from the validation set of ImageNet, of which
top-1 classification results are one of those five benign classes. The six
images were displayed on a monitor. For each benign image, we calculated ten
alternative $3\times 3$ attack patterns (the highest resolution at one meter by
the Ring camera). Results show that our attacks achieve an overall untargeted
attack success rate of 100\% (Tab.~\ref{tab:ring}).

\section{Discussion}
\label{sec:disc}

In this section, we discuss practical challenges to GhostImage attacks,
speculate as to effective countermeasures, and outline variations on the
original attacks.

\subsection{Practicality of GhostImage Attacks}
\label{sec:disc-prac}

\textbf{Moving targets and alignment:} The overlap of ghosts and objects of
interest in images must be nearly complete for the attacks to succeed.  In the
cases of a moving camera (e.g., one mounted to a vehicle), the attacker needs
to be able to accurately track the movement of the targeted camera, otherwise
the attacker can only sporadically inject ghosts. Note that,
although aiming (or tracking) moving targets is generally challenging in remote
sensor attacks (e.g., the AdvLiDAR attack \cite{cao2019adversarial}  
assumes the attacker can achieve this via camera-based object detection and
tracking), existing works \cite{truong2005preventing, chinalasers} have
demonstrated the feasibility of tracking cameras and then neutralizing them.
This paper's main goal is to propose a new category of camera attacks, which
enables an attacker to inject arbitrary patterns.

\textbf{Conspicuousness:} The light bursts around the light source in Figures
\ref{fig:yolov3_stop} and \ref{fig:naive_eg} may raise  stealthiness concerns
about our attacks. However, according to our analysis in Sec.~\ref{sec:ghost-pos}, such
bursts can actually be eliminated because the light source can be outside of
view (See Fig.~\ref{fig:oov} and \cite{wiki-flare}). Even the light source has
to be in the frame (due to the lens configuration), we argue that a
camera-based object classification system used in autonomous systems generally
make decisions without human input (for example, in a Waymo self-driving
taxi~\cite{waymo}, no human driver is required, or in a Tesla
car~\cite{autopilot}, real-time images would not typically be displayed).
Additionally, the attack beam is so concentrated that only the victim camera
can observe it while other human-beings (e.g., pedestrians) cannot
(Fig.~\ref{fig:inveh_setup}).  Finally, the light source only needs to be on
for a short amount of time, as a few tampered frames can cause
incorrect actions \cite{nassi2020phantom}.

\textbf{Projectors, lenses, and attack distances:} Based on our model
(Eq.~\ref{eq:gamma}) and experiments (Tab.~\ref{tab:cm_para}), the
illuminance on the camera from the projector would better be $4/3$ of the part
from ambient illuminance (to achieve a success rate of 100\%). Since
	$\text{Illuminance} \propto \text{Luminance}\cdot r_\text{throw}^2/d^2$,
in order to carry out an attack during sunny days (typically with Illuminance
\SI{40e3}{\lux}), a typical projector (e.g., \cite{epsonproj} with Luminance
\SI{9e3}{\lumen}) should work with a telephoto lens \cite{opteka} (with a
throwing radio $100$) at a distance of one meter. For longer distances or
brighter backgrounds, one can either acquire a more powerful projector (e.g.,
\cite{barco} with \SI{75e3}{\lumen}), or combine multiple lenses to achieve
much larger throwing ratios (e.g., two Optela lenses \cite{opteka} yield $200$,
etc.), or both.

\textbf{Ghost effect dependence:} There are several challenges an attacker
needs to overcome to launch GhostImage attacks. First, the attacks rely largely
on ghost effects; if ghosts cannot be induced, or if they are not significant
enough, the attacks might be infeasible against a given camera (lens). However,
this is unlikely because these effects occur in most cameras (e.g., Apple
iPhones~\cite{iphone8flare, iphone11flare}). Moreover, no ``flare-free'' lens
exists to the best of our knowledge (even with anti-glare coatings).
In addition, if ghost effects are unavailable to the attacker there are other
optics effects available, such as blooming effects~\cite{blooming}, that can
also be
leveraged to produce GhostImage-like attacks.

\textbf{Knowledge of the targeted system} We assume that both types of attackers know
about the camera matrix $M$ and color calibration matrix $H_c$.  We note that
the attacks can still be \emph{effective} without such knowledge but with it
the attacks can be more \emph{efficient}. For example, the attacker may choose
to lower their attack success expectation but the probability of successful
attack may still be too high for potential victims to bear (e.g., a success
rate of only 10\% might be unacceptable for reasons of safety in automated
vehicles).  This challenge can be largely eliminated if the attacker is able to
purchase a camera of the same, or similar, model as used in the targeted system
and use it to derive the matrices.  Although the duplicate camera may not
be exactly the same to the target one, the channel model would still be in the
same form with approximate, probably fine-tuned parameters (via retraining),
thanks to the generality of our channel model. Lastly, assuming white-box
knowledge on sensors is widely adopted and accepted in the literature, e.g.,
the AdvLiDAR attack~\cite{cao2019adversarial}.  Also, we assume white-box
attacks on the neural network, though this assumption can be eliminated by
leveraging the transferability of adversarial examples
\cite{papernot2016transferability, papernot2017practical,
bhagoji2018practical, chen2020devil}.

\textbf{Object detection:} We have
assumed that   the object detector can crop out the region of the image which contains the projected ghost pattern(s). Though it cannot be guaranteed that an object detector will
automatically include the ghost patterns, we note that a GhostImage attacker
could design ghost patterns that cause an object detector to include them
\cite{song2018physical, zhao2019seeing} and, at the same time, the cropped
image would fool the subsequent object classifier.

\subsection{Attack Variations}
\label{sec:alter}

Should ghost effect  be not available, we investigated alternative strategies
that still allow an attacker to cause misclassificaiton of objects of interest.
inject adversarial noise, absent ghosts and/or flares, without placing the
light source directly in front of the object, which would allow for
straightforward detection of the attack.
For example, using a beamsplitter that merges two light beams coming from two
directions: one is the light reflected from the object the attacker wishes to
obscure and the other is the light from the projector. The merged light beams
enter the targeted camera as a superposition of the original object and the
adversarial pattern, with the resulting image able to fool the classifier.
Appendix~\ref{sec:beamsplit} provides details on this attack vector and its
efficacy.

Finally, while projectors provide an attacker with the greatest control over
adversarial patterns, and hence the ability to spoof complex objects, we have
found that RGB lasers~\cite{rgblasers} can be used at greater distances to
spoof simple objects. It may also be possible for an attacker to use multi-laser systems, or even flashlights~\cite{rgbflashlights}, to create complex patterns akin to the decorative lights displayed on
Christmas trees \cite{christmastree}.

\subsection{Countermeasures}

The most straightforward countermeasure to GhostImage attacks is flare
elimination, either by using a lens hood~\cite{fotographee, lenshood} or
through flare detection.  Lens hoods are generally not favored as they reduce
the angle of view of the camera, which is unacceptable for many autonomous
vehicle and surveillance applications. Note that there are so-called
liquid lenses that can change its focal length by reshaping itself, which
results in only one reflection path, hence fewer flares. However, such lenses
have not been widely adopted yet \cite{liquidlenses}.

Currently, we are working on a defense where we   first identify all
flares/ghosts in an image, and then detect if a ghost contains  malicious patterns. However, the challenges
include: First, ghosts are typically transparent thus hard to detect
\cite{xu2015transcut}; Second, natural ghosts are so common and varied that false positives can occur inevitably; Third, just as adversarial noise can be crafted
to deceive neural networks an analogous procedure could be used to craft
flares/ghosts that deceive flare/ghost detectors. 

A complementary line of defense would be to make neural networks themselves
robust to GhostImage attacks. Existing approaches against adversarial examples
(e.g., \cite{papernot2016distillation, madry2018towards, lecuyer2019certified,
wu2019defending}, etc.) are ill-suited for this task, however, as GhostImage attacks
do not necessarily follow the constraints placed on traditional adversarial
examples in that perturbations do not have to be bounded within a small norm,
meanwhile these defenses were not designed for arbitrarily
large perturbations.

Another complementary approach of defense is to exploit prior knowledge, such
as GPS locations of signs, to make decisions, instead of only depending on
real-time sensor perception (though this approach would not work for
spontaneous appearance of objects, e.g., in the context of collision
avoidance). Sensor redundancy/fusion could also be helpful: autonomous vehicles
could be equipped with multiple cameras and/or other types of sensors, such as
LiDARs and radars, which would at least increase the cost of the attack by
requiring the attacker to target multiple sensors.

\section{Related Work}
\label{sec:relatedwork}

Since our attack spans two domains, in this section we review both sensor
attacks and adversarial examples.

\subsection{Sensor attacks} Perception in autonomous and surveillance systems
occurs through sensors, which convert analog signals into digital ones that are
further analyzed by computing systems. Recent work has demonstrated that the
sensing mechanism itself is vulnerable to attack and that such attacks may be
used to bypass digital protections~\cite{giechaskiel2020taxonomy,
yan2020minimalist}. For example, anti-lock braking system (ABS) sensors have
been manipulated via magnetic fields by Shoukry~et~al.~\cite{shoukry2013non},
microphones have been subject to inaudible voice and light-based attacks
\cite{zhang2017dolphinattack,sugawaralight}, and light sensors can be
influenced via electromagnetic interference to report lighter or darker
conditions~\cite{selvaraj2018electromagnetic}.  The reader is referred to
\cite{giechaskiel2020taxonomy, yan2020minimalist} for a review of
analog sensor attacks. %

Existing remote attacks against cameras \cite{petit2015remote, yan2016can,
truong2005preventing} are denial-of-service attacks and do not seek to
compromise the object classifier as our GhostImage attacks do. Those attacks
that do target object classification \cite{zhao2019seeing, eykholt2018robust,
chernikova2019are} are either digital or physical domain attacks (i.e., they
need to modify the object of interest in this case a traffic sign or road
pavement, physically or after the object has been captured by a camera) rather
than perception domain attacks~\cite{yan2020minimalist,
giechaskiel2020taxonomy}. Li et al.~\cite{li2019adversarial}'s attacks on
cameras require attackers to place stickers on lenses, to which is generally
hard to get access. Similarly, several light-based attacks
\cite{zhou2018invisible, nassi2020phantom, nguyen2020adversarial} fall within
the domain of physical attacks, as opposed to our perception domain attack,
because these approaches illuminate the object of interest with visible
or infrared light. We did not consider infrared noise in our attacks as it can
be easily eliminated from visible light systems using infrared filters.
Attacks on LiDAR systems \cite{shin2017illusion, petit2015remote,
cao2019adversarial, tu2020physically} are also related to this work;
however, these attacks are considerably easier to carry out than our visible
light-based attacks against cameras because attackers can directly inject
adversarial laser pulses into LiDARs without worrying about blocking the object
of interest.

\subsection{Adversarial examples} State-of-the-art adversarial examples can be
categorized as digital \cite{szegedy2014intriguing, carlini2017towards,
papernot2016limitations, moosavi2016deepfool, goodfellow2014explaining,
kurakin2018adversarial, moosavi2017universal, xiao2018spatially,
bhagoji2018practical, papernot2016transferability, papernot2017practical,
liu2016delving, brendel2018decision}, or physical domain attacks
\cite{kurakin2018adversarial, eykholt2018robust, sharif2016accessorize,
zhou2018invisible, sitawarin2018rogue, zhao2019seeing, wu2019making,
komkov2019advhat, duan2020adversarial, sato2020security, chen2020devil} in
which objects of interest are physically modified to cause misclassification.
The latter differs from GhostImage attacks in that we target the sensor
(camera) without needing to physically modify any real-world object. Another
line of work focuses on unrestricted adversarial examples
\cite{song2018constructing, hosseini2018semantic, Bhattad2020Unrestricted},
though they are limited in the digital domain.

In terms of defending neural networks from adversarial examples, be they
physical or digital, schemes include modifying the network to be more
robust~\cite{papernot2016distillation, madry2018towards,
goodfellow2014explaining, dhillon2018stochastic, xie2018mitigating,
liu2018towards, wong2018provable, lecuyer2019certified,
raghunathan2018certified, cao2017mitigating, wu2019defending}, while other
defenses have focused on either detecting adversarial
inputs~\cite{xu2017feature, hendrycks2017early, metzen2017detecting,
meng2017magnet, ma2019nic, cai2020detecting} or transforming them into benign
images~\cite{meng2017magnet, buckman2018thermometer, guo2018countering},
most of which are under the general assumption of bounded perturbations,
hence are inapplicable  to our attacks;  while others could also be bypassed by being
taken as constraints in the optimization formulation. As this work mainly focuses on sensor attacks, similar to \cite{cao2019adversarial, zhao2019seeing,
eykholt2018robust, sharif2016accessorize} we leave the validation of defenses as future work.

\section{Conclusion}
\label{sec:con}

In this work we presented GhostImage attacks against camera-based object
classifiers.  Using common optical effects, viz.\ lens flare/ghost effects, an
attacker is able to inject arbitrary adversarial patterns into camera images
using a projector. To increase the efficacy of the attack, we proposed a
projector-camera channel model that predicts the location of ghosts, the
resolution of the patterns in ghosts, given the projector-camera arrangement,
and accounts for exposure control and color calibration. GhostImage attacks
also leverage adversarial examples generation techniques to find optimal attack
patterns. We evaluated GhostImage attacks using three image datasets and in
both indoor and outdoor environments on three cameras.  Experimental results
show that GhostImage attacks were able to achieve attack success rates as high
as 100\%, and also have potential impact on autonomous systems, such as
self-driving cars and surveillance systems.

\appendices

\section{Neural networks and datasets}
\label{sec:nntables}

\begin{figure}[h]
	\centering
	\includegraphics[width=.5\columnwidth]{./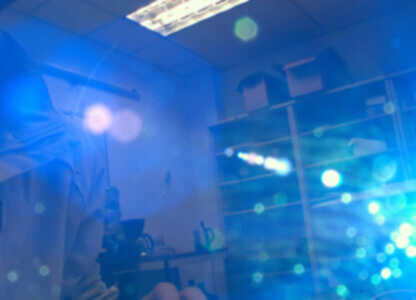}
	\caption{There are ghosts but the light source is out of view.}
	\label{fig:oov}
\end{figure}

Here we present the architectures of two neural networks
(Tables~\ref{tab:lisa_arch} and \ref{tab:cifar10-arch}) and their
hyper-parameters (Table~\ref{tab:cifar10-hyper}). The balanced LISA dataset is
also detailed in Table~\ref{tab:lisa}.

\begin{table}[h]
	\centering
	\caption{Neural network architecture for LISA dataset}
	\rowcolors{2}{gray!10}{}
	\label{tab:lisa_arch}
	\begin{tabular}{ll}
		\toprule
		Layer Type           & Model                     \\ \midrule
		ReLU convolution     & $64$ $8 \times 8$-filters  \\
		ReLU convolution     & $128$ $6\times 6$-filters  \\
		ReLU convolution     & $128$ $5\times 5$-filters \\
		ReLU Fully-connected & $256$                     \\
		Softmax              & $8$                      \\ \bottomrule
	\end{tabular}
\end{table}

\begin{table}[h]
	\centering
	\caption{Balanced LISA dataset}
	\label{tab:lisa}
	\rowcolors{2}{gray!10}{}
	\begin{tabular}{lll}
		\toprule
		Index & Sign Name           & Quantity \\
		\midrule
		0 & Added Lane          & 80       \\
		1 & Merge               & 80       \\
		2 & Pedestrian Crossing & 80       \\
		3 & School              & 77       \\
		4 & Signal Ahead        & 80       \\
		5 & Stop                & 80       \\
		6 & Stop Ahead          & 80       \\
		7 & Yield               & 80       \\
		\bottomrule
	\end{tabular}
\end{table}

\begin{table}[h]
	\centering
	\caption{Neural network architecture for CIFAR-10 dataset}
	\rowcolors{2}{gray!10}{}
	\label{tab:cifar10-arch}
	\begin{tabular}{ll}
		\toprule
		Layer Type           & Model                     \\
		\midrule
		ReLU convolution     & $64$ $8 \times 8$-filters  \\
		ReLU convolution     & $128$ $6\times 6$-filters  \\
		ReLU convolution     & $128$ $5\times 5$-filters \\
		ReLU convolution     & $64$ $3\times 3$-filters  \\
		Max Pooling          & $2\times 2$               \\
		ReLU convolution     & $128$ $3\times 3$-filters \\
		ReLU convolution     & $128$ $3\times 3$-filters \\
		Max Pooling          & $2\times 2$               \\
		ReLU Fully-connected & $256$                     \\
		ReLU Fully-connected & $256$                     \\
		Softmax              & $10$                      \\
		\bottomrule
	\end{tabular}
\end{table}

\begin{table}[h]
	\centering
	\arrayrulecolor{black}
	\caption{Training hyper-parameters}
	\rowcolors{2}{gray!10}{}
	\label{tab:cifar10-hyper}
	\begin{tabular}{ll}
		\toprule
		Parameter     & Value \\
		\midrule
		Learning Rate & $0.1$ \\
		Momentum      & $0.9$ \\
		Dropout       & $0.5$ \\
		Batch Size    & $128$ \\
		Epochs        & $50$  \\
		\bottomrule
	\end{tabular}
\end{table}

\section{Detailed Projector-Camera Model Parameters}

Table~\ref{tab:cm_para} lists all parameters of the projector-camera channel
model. The color calibration matrix is
\[
	H_c = \begin{bmatrix}0.5&0&0.1\\0&0.5&0\\0&0&0.8\end{bmatrix},
\]
and the camera matrix is
\[
	M =
	\begin{bmatrix}
		-0.1406  &  0.0537 &  -0.0200 &   0.8452 \\
		0.0321  &  0.0547 &  -0.1385 &  0.4893 \\
		-0.0000 &  -0.0000 &  -0.0000  &  0.0009
	\end{bmatrix}.
\]

\begin{table}[t]
	\centering
	\caption{Channel model parameter examples}
	\label{tab:cm_para}
	\rowcolors{2}{gray!10}{}
	\begin{tabular}{lll}
		\toprule
		Description & Symbol           & Value \\
		\midrule
		Throwing ratio & $r_\text{throw}$ & 20 \\
		Physical size of ghosts & $S_f$ & $0.0156~\text{cm}^2$ \\
		Projection resolution & $P_O$ & $1024\times 768$ \\
		Flare booster & $\rho$ & 30 \\
		Bulb intensity & $T_a$ & $[0, 1]$ \\
		Ambient illuminance & $\I_{env}$ & \SI{300}{\lux} (indoor) \\
		Projector ill. & $\I$ & \SI{400}{\lux} (at 1 m) \\
		Projector max ill. & $\I_\text{max}$ & \SI{1200}{\lux} (at 1 m) \\
		Camera matrix & $M$          & See below       \\
		Color calibration matrix & $H_c$ & See below \\
		In Equation~\ref{eq:ill}& $a$ & 8.9 \\
		In Equation~\ref{eq:ill}& $b$ & 6.7 \\
		In Equation~\ref{eq:ill}& $c_t$ & -7.8 \\
		In Equation~\ref{eq:ill}& $c_d$ & 0.25 \\
		\bottomrule
	\end{tabular}
\end{table}

\section{Direct projection using an additional lens}
\label{sec:dir}

\begin{lemma}

	However the additional lenses are placed, the image of noise cannot overlay
	with the image of the object without obscuring the object.

\end{lemma}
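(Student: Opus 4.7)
The plan is to reduce the claim to a ray-geometric argument using the camera imaging model already introduced via Eq.~\ref{eq:homo_coord}. The key observation is that every camera pixel $p$ corresponds to a unique line of sight $\ell_p$ in object space, and any light contributing to pixel $p$ must enter the camera's aperture traveling along (or within an infinitesimal cone of) $\ell_p$. This is a purely geometric fact, independent of what optics lie outside the camera body, so it cuts through the ``however the additional lenses are placed'' quantifier at the start.

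First, I would invoke the homogeneous-coordinate model: for the object $O$ to image at a pixel $p$, $O$ must lie on $\ell_p$, since the ray from $O$ through the camera's optical center is what produces the image at $p$. Similarly, for the projected noise to overlap $O$ at $p$, the noise photons reaching pixel $p$ must also travel along $\ell_p$ just before entering the camera. This holds regardless of any refraction performed upstream by the projector's additional lenses, because between the last optical surface and the camera entrance, light propagates in a straight line through air. Next, let $L$ denote the outermost refracting surface of the projector's lens stack (i.e., whatever the attacker has assembled, taken as a black box). Any photon that the projector contributes to pixel $p$ must leave $L$ heading along the direction of $\ell_p$ toward the camera, so the exit point on $L$ must lie on $\ell_p$; that is, $L$ intersects $\ell_p$.

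Since $O$ is opaque and also intersects $\ell_p$, the intersection point of $L$ with $\ell_p$ cannot be on the far side of $O$ from the camera (light would have to traverse $O$), so $L$ sits between the camera and $O$ along $\ell_p$ and therefore occludes $O$ at that direction. Applied to every pixel of overlap, $L$ covers the entire overlap region in the camera's view, which is precisely the statement of the lemma. I do not anticipate a serious obstacle: the whole argument is a consequence of free-space straight-line propagation combined with the one-to-one pixel-to-ray correspondence. The only subtlety worth flagging in the write-up is that the ``black box'' treatment of $L$ is what makes the quantification over arbitrary lens placements trivial --- no internal configuration can alter the direction from which light finally reaches the camera's aperture, which is the only information the sensor sees.
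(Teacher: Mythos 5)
Your argument takes a genuinely different route from the paper's. The paper does not reason about lines of sight and occlusion at all: it models the additional lens $L_1$ as a transparent refracting element that \emph{completely covers} the object from the camera's view, places the noise source $N$ beside (not in front of) the object at the same object distance, and then applies the thin-lens equation twice. The key step is that the magnification of each lens is uniform for all points at a common object distance, so the separation between $N$ and $A$ is preserved (up to scale) through $L_1$ and then through the camera lens $L_2$; hence $N_2$ never lands on $A_2B_2$. In other words, the paper proves that even when the attacker \emph{does} interpose optics across the entire field of view, the noise image cannot be dragged onto the object's image.

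This exposes a genuine gap in your proof: the step ``$L$ sits between the camera and $O$ along $\ell_p$ and therefore occludes $O$'' is false when $L$ is a transparent lens, which is exactly the case the lemma quantifies over (``however the additional lenses are placed'') and the case the paper analyzes. A bare lens straddling the sight line does not obscure the object; it transmits and refracts the object's light. Worse, once a refracting element is allowed between the object and the camera, your founding premise collapses: the pixel-to-straight-line correspondence of Eq.~(\ref{eq:homo_coord}) assumes free-space propagation from the scene to the aperture, so ``$O$ images at $p$ iff $O$ lies on $\ell_p$'' no longer holds, and neither does the claim that noise light reaching $p$ must have traveled along $\ell_p$ outside the attacker's optics. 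Your argument is sound for the special case where the noise-delivery device is an opaque closed body (a projector aimed head-on), but that is the easy half of the lemma; the substantive half --- that a transparent relay lens which leaves the object visible still cannot bring the two images into coincidence --- requires something like the paper's uniform-magnification argument and is not covered by your occlusion reasoning. (Separately, the pinhole idealization glosses over the finite aperture, but that could be repaired; the transparency issue cannot be repaired within your framework.)
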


\begin{proof}
We are going to prove that even with an additional concave lens. See
Figure~\ref{fig:concave} for a diagram, where a concave lens $L_1$ is placed
between an object $AB$ and a camera's convex lens $L_2$. $L_1$'s focal length
is $f_1$ and $L_2$'s is $f_2$.  The distance between $L_1$ and $L_2$ is $d_2$.
The noise source $N$ is right upon $A$. From the perspective of $L_2$, $AB$ is
completely obfuscated by $L_1$; in other words, all light rays of $AB$ that go
through $L_2$ must at first go through $L_1$. Both the object and the noise
source share the same object distance to $L_1$, which is $d_1$. The image of
$AB$ formed by $L_1$ is $A_1B_1$, and $N_1$ is the image of $N$. The distance
between $A_1B_1$ and $L_1$ is $d_3$. The image of formed by $L_2$ is $A_2B_2$
and $N_2$. The distance between $A_2B_2$ and $L_2$ is $d_4$.

In order to solve a problem consisting of multiple lenses, we usually analyze
each lens individually and sequentially. That is, we calculate the image formed
by the first lens, then use that image as the input to the second lens.

For $L_1$, the input is $AB$, thus we have
\begin{equation}
	\frac{1}{d_1} + \frac{1}{d_3} = \frac{1}{f_1},
\end{equation}
and the magnification is calculated as
\begin{equation}
	M = \frac{|N_1B_1|}{|NB|}
	= \frac{|A_1B_1|}{|AB|}
	= \frac{|A_1N_1|}{|AN|}
	= \frac{d_3}{d_1},
\end{equation}
from which we know that $N_1$ does not overlap with $A_1B_1$.

For $L_2$, the input is $A_1B_1$ ($L_2$ cannot ``see'' $AB$ directly because
$AB$ is completely obfuscated by $L_1$), thus we get
\begin{equation}
	\frac{1}{d_2+d_3} + \frac{1}{d_4} = \frac{1}{f_2},
\end{equation}
and the magnification is calculated as
\begin{equation}
	M = \frac{|N_2B_2|}{|N_1B_1|}
	= \frac{|A_2B_2|}{|A_1B_1|}
	= \frac{|A_2N_2|}{|A_1N_1|}
	= \frac{d_4}{d_2+d_3},
\end{equation}
from which we know that $N_2$ does not overlap with $A_2B_2$.
As a result, no matter how we place the additional concave lens,
we cannot apply the noise to the image of the object without obscuring the object.
The proof for a convex lens follows the same logic.

\end{proof}

\begin{figure}[t]
	\centering
	\resizebox{.9\columnwidth}{!}{\begin{tikzpicture}[y=-1cm]
\draw[black] (11.475,10) +(-44:0.725) arc (-44:44:0.725);
\draw[black] (12.525,10) +(-136:0.725) arc (-136:-224:0.725);
\draw[black] (14.15,10) +(-169:5.05) arc (-169:-191:5.05);
\draw[black] (3.85,10) +(-11:5.05) arc (-11:11:5.05);
\filldraw[red] (3.5,7.5) circle (0.05111cm);
\filldraw[red] (7.6,9.35111) circle (0.04cm);
\filldraw[red] (12.4,10.3) circle (0.03111cm);
\draw[black] (8.8,9) -- (9.2,9);
\draw[black] (8.8,11) -- (9.2,11);
\draw[dashed,black] (12,9.5) -- (9,9) -- (3.5,8);
\draw[dashed,black] (12,10.5) -- (9,11) -- (3.5,12);
\draw[arrows=triangle 45-,black] (3.5,8) -- (3.5,12);
\draw[black] (3.5,8) -- (9,10) -- (10.5,10.54444);
\draw[dashed,black] (9,9.5) -- (6.5,9.5);
\draw[arrows=triangle 45-,black] (7.6,9.5) -- (7.6,10.5);
\draw[dotted,black] (11,8) -- (11,11);
\draw[arrows=-to,black] (9.7,8.2) -- (9,8.2);
\draw[red] (3.5,7.5) -- (10.3,10.6);
\draw[red] (3.5,7.5) -- (11,10);
\draw[black] (3.5,8) -- (11,10);
\draw[dashed,red] (9,9.34) -- (6.5,9.34);
\draw[arrows=-triangle 45,black] (12.4,9.8) -- (12.4,10.2);
\draw[dotted,black] (9,8) -- (9,12.5);
\draw[arrows=-to,black] (10.3,8.2) -- (11,8.2);
\draw[dotted,black] (3.5,12) -- (3.5,12.5);
\draw[arrows=-to,black] (7,12.1) -- (9,12.1);
\draw[arrows=-to,black] (5.5,12.1) -- (3.5,12.1);
\draw[dotted,black] (12.35111,10) -- (12.35111,8);
\draw[dotted,black] (12,12.5) -- (12,8);
\draw[dotted,black] (12.4,12.5) -- (12.4,8);
\draw[dotted,black] (7.6,9.5) -- (7.6,12);
\draw[arrows=-to,black] (9.7,12.1) -- (9,12.1);
\draw[arrows=-to,black] (11.3,12.1) -- (12,12.1);
\draw[arrows=-to,black] (8,11.7) -- (7.6,11.7);
\draw[arrows=-to,black] (8.6,11.7) -- (9,11.7);
\draw[dashed,black] (3,10) -- (13,10);
\path (12.05111,12.1) node[text=black,anchor=base west] {\footnotesize{}$d_4$};
\path (3.1,7.6) node[text=black,anchor=base west] {\large{}$N$};
\path (3.1,8.1) node[text=black,anchor=base west] {\large{}$A$};
\path (7.2,9.7) node[text=black,anchor=base west] {\footnotesize{}$A_1$};
\path (7.4,9.2) node[text=black,anchor=base west] {\footnotesize{}$N_1$};
\path (12.5,10.2) node[text=black,anchor=base west] {\footnotesize{}$A_2$};
\path (12.5,10.4) node[text=black,anchor=base west] {\footnotesize{}$N_2$};
\path (3.1,12) node[text=black,anchor=base west] {\large{}$B$};
\path (7.2,10.5) node[text=black,anchor=base west] {\footnotesize{}$B_1$};
\path (12.5,9.8) node[text=black,anchor=base west] {\footnotesize{}$B_2$};
\path (8.8,8.8) node[text=black,anchor=base west] {\large{}$L_1$};
\path (11.8,9.3) node[text=black,anchor=base west] {\large{}$L_2$};
\path (10,8.2) node[text=black,anchor=base] {\large{}$f_1$};
\path (10.5,12.1) node[text=black,anchor=base] {\large{}$d_2$};
\path (8.3,11.7) node[text=black,anchor=base] {\large{}$d_3$};
\path (6.3,12.1) node[text=black,anchor=base] {\large{}$d_1$};
\path (12.2,8.2) node[text=black,anchor=base] {\footnotesize{}$f_2$};

\end{tikzpicture}
	\caption{Noise $N$ cannot overlap with the image of the object $AB$ even
	with an additional concave lens.}
	\label{fig:concave}
\end{figure}
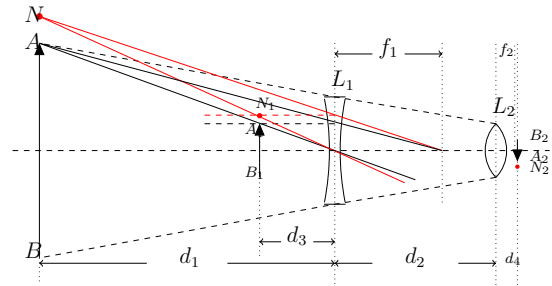

\section{Beamsplitting Method}
\label{sec:beamsplit}

See Figure~\ref{fig:beamsplitter} for a diagram of this method, where a
beamsplitter is used to merge two light beams coming from two directions. The
light beams coming from the object (marked in red) get reflected and
transmitted, i.e., the beamsplitter does not obscure the object. The
transmitted portions go into the camera, forming an image of the object. The
light beams from the light noise source (marked in blue) also get reflected and
transmitted. The reflected portions (instead of the transmitted portions) are
captured by the camera, forming an image of the noise. Two images overlap as a
potential adversarial example, depicted as a small magenta (red plus blue) stop
sign in the camera.

An in-lab experimental setup is shown in Figure~\ref{fig:glass_real_setup}. In
this setup, we used the NEC projector~\cite{nec-np1150} as the light noise
source. Instead of using an expensive beamsplitter, we found out that a single
piece of glass could also function like a beamsplitter. We placed a piece of
white paper in front of the projector's lens to reduce the projected image size
(otherwise the projected image becomes too large even within a small throwing
distance). This does not change the attack plausibility because the imager can
clearly capture the noise pattern on the paper. These elements were placed in a
way that the noise image (from the paper) would overlap with the image of the
object from the view of the camera. A misclassification matrix is shown in
Figure~\ref{fig:beamsplitter_matrix} where an overall attack success rate of 55\% was
achieved.

\begin{figure}
	\centering
	\begin{subfigure}[t]{.9\columnwidth}
		\centering
		\resizebox{.8\columnwidth}{!}{\begin{tikzpicture}[y=-1cm]
\draw[black] (13.9,11.7) -- (13.9,12.8) -- (12.4,12.8) -- (12.4,13.4) -- (13.9,13.4) -- (13.9,14.4) -- (14.6,14.4) -- (14.6,11.7) -- cycle;
\draw[black] (14.2,12.2) rectangle (14.3,12.4);
\draw[arrows=-to,red] (7.5,13.2) -- (11.9,13.2);
\draw[very thick,black] (8.73778,14.03556) -- (10.63778,12.03556);
\draw[black] (9.5,12) -- cycle;
\draw[dashed,arrows=-to,blue] (9.7,13) -- (9.7,14.4);
\draw[dashed,arrows=-to,red] (9.5,13.2) -- (9.5,12);
\draw[arrows=-to,blue] (9.7,11.5) -- (9.7,13) -- (11.9,13);
\draw[black] (9.2,11) -- (9.2,11.3) -- (9.5,11.3) -- (9.5,11.5) -- (9.9,11.5) -- (9.9,11.3) -- (10.2,11.3) -- (10.2,11) -- cycle;
\path[draw=blue,thick,fill=magenta] (14.36444,13.4) -- (14.50889,13.15111) -- (14.36667,12.90222) -- (14.07778,12.9) -- (13.93333,13.14889) -- (14.07556,13.39778) -- cycle;
\path[draw=black,thick,fill=red] (7,13.8) -- (7.40667,13.10444) -- (7.00667,12.40444) -- (6.2,12.4) -- (5.79333,13.09556) -- (6.19333,13.79556) -- cycle;
\path (10.3,11.3) node[text=black,anchor=base west] {\large{}Noise source};
\path (10.8,12.3) node[text=black,anchor=base west] {\large{}Beamsplitter};
\path (12.3,14.1) node[text=black,anchor=base west] {\large{}Camera};
\path (6,12) node[text=black,anchor=base west] {\large{}Object};
\path (14.22222,13.22222) node[text=white,anchor=base] {\scriptsize{}STOP};
\path (6.6,13.3) node[text=white,anchor=base] {\Large STOP};

\end{tikzpicture}%

		\caption{Noise injection with a beam-splitter/glass}
		\vspace{5pt}
		\label{fig:beamsplitter}
	\end{subfigure}
	\begin{subfigure}[t]{.9\columnwidth}
		\centering
		\includegraphics[width=.8\columnwidth]{./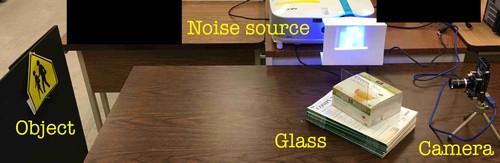}
		\caption{Experiment setup with a piece of glass.}
		\label{fig:glass_real_setup}
	\end{subfigure}
	\caption{The beamsplitter method setup.}
	\label{fig:glass_setup}
\end{figure}

\begin{figure}
	\centering
	\includegraphics[width=.8\columnwidth]{./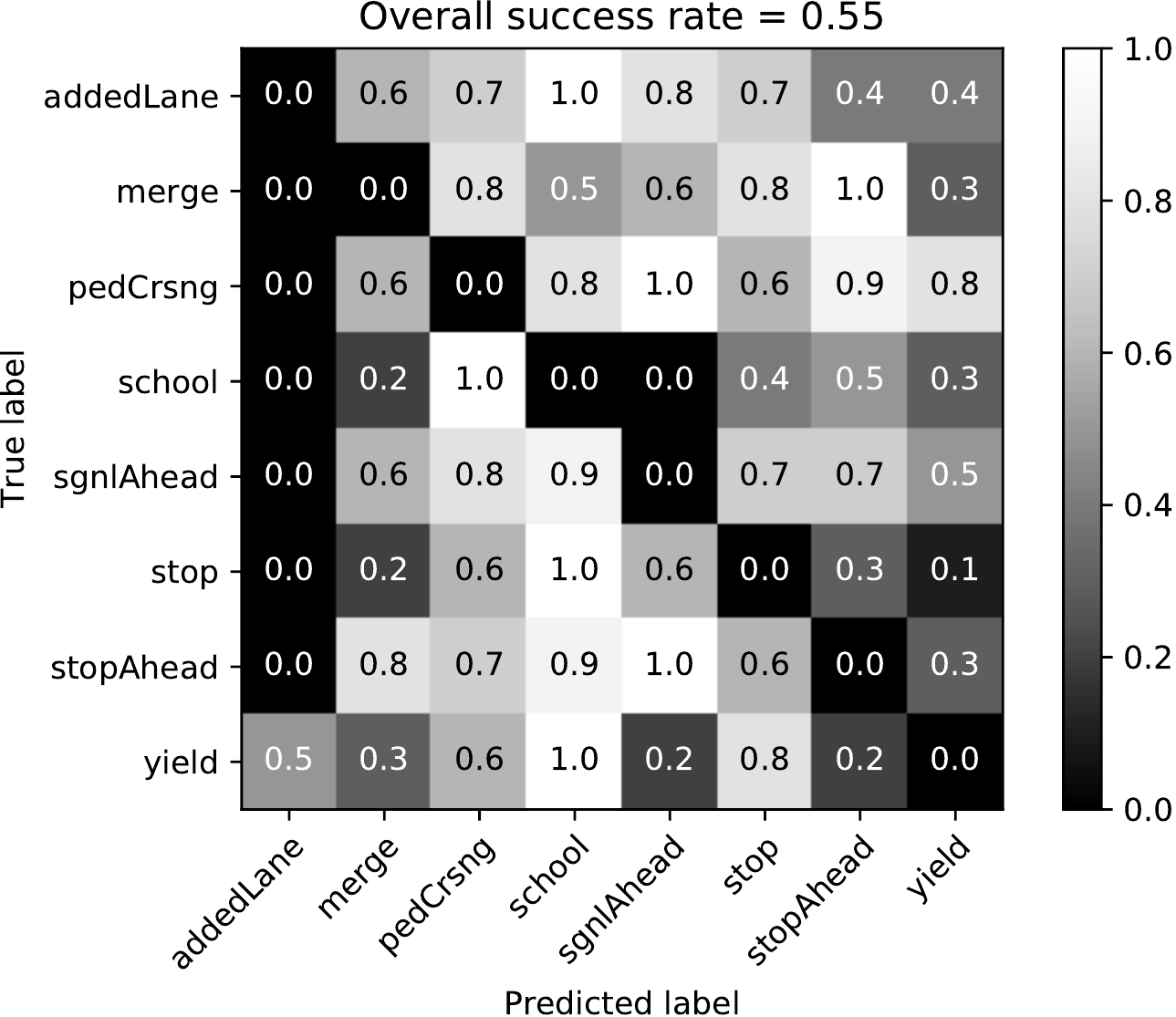}
	\caption{Misclassification matrix of the beamsplitting method at $4\times
	4$ resolutions}
	\label{fig:beamsplitter_matrix}
\end{figure}

\newpage
\bibliographystyle{unsrt}
\bibliography{adv}

\end{document}